\newtheorem{thm}{Theorem}[section]
\newtheorem{cor}[thm]{Corollary}
\newtheorem{lem}[thm]{Lemma}
\newtheorem{prop}[thm]{Proposition}
\theoremstyle{definition}
\newtheorem{defn}[thm]{Definition}
\theoremstyle{remark}
\numberwithin{equation}{section}
\newcommand{\norm}[1]{\left\Vert#1\right\Vert}
\newcommand{\abs}[1]{\left\vert#1\right\vert}
\newcommand{\set}[1]{\left\{#1\right\}}
\newcommand{\Comp}{\mathbb C}
\newcommand{\Real}{\mathbb R}
\newcommand{\Nat}{\mathbb N}
\newcommand{\Prob}{\mathbb P}
\newcommand{\Exp}{\mathbb E}
\newcommand{\eps}{\varepsilon}
\newcommand{\To}{\rightarrow}
\newcommand{\al}{\alpha}
\newcommand{\la}{\lambda}
\newcommand{\cB}{\mathcal{B}}
\newcommand{\cD}{\mathcal{D}}
\newcommand{\cF}{\mathcal{F}}
\newcommand{\cI}{\mathcal{I}}
\newcommand{\cM}{\mathcal{M}}
\newcommand{\of}[1]{\left ( #1 \right ) }
\newcommand{\zd}{\mathbb{Z}^{\mathrm{d}}}
\newcommand{\im}{\mathrm{Im}}
\newcommand{\re}{\mathrm{Re}}
\newcommand{\scalp}[2]{\langle#1|#2\rangle}
\newcommand{\supp}{\rm{supp}}
\newcommand{\txi}{\widetilde{\xi}}
\newcommand{\Leb}{\mathcal{L}}
\begin{document}

\title[]{Poisson statistics of eigenvalues in the hierarchical Anderson model.}%
\author{Evgenij Kritchevski}%
\address{Department of Mathematics and Statistics.
McGill University, 805 Sherbrooke Street West Montreal, QC, H3A
2K6 Canada}
\email{ekritc@math.mcgill.ca}



\date{
October 5, 2008}%
\begin{abstract} 
We study the eigenvalue statistics for the hieracharchial Anderson model of Molchanov  \cite{K1,K2,K3,M2,M3}.  We prove Poisson fluctuations at arbitrary disorder, when the the model has spectral dimension $\mathrm{d}< 1$. The proof is based on Minami's technique \cite{Mi} and we give an elementary exposition of the probabilistic arguments.
\end{abstract}

\maketitle
\section{Introduction}

The models discussed in this paper fall into the following general framework.
We are given a countable set $\mathbb{X}$, a bounded self-adjoint operator $H_0$ acting on the Hilbert space $l^2(\mathbb{X})$ and  a random potential $V_\omega$ acting diagonally on $l^2(\mathbb{X})$:
$$(V_\omega \psi)(x)=\omega(x)\psi(x), \qquad\psi\in l^2(\mathbb{X}), x\in \mathbb{X}.$$
Here $(\omega(x))_{x\in \mathbb{X}}$ denote independent identically distributed (i.i.d.) random variables with a bounded density $\gamma$. Hence the random parameter $\omega$ is an element of the probability space $(\Omega,\cF,\Prob)$, where $\Omega=\Real^{\mathbb{X}}$, $\cF$ is the product Borel $\sigma$-algebra on $\Omega$ and the probability measure is $\Prob=\times_{x\in\mathbb{X}}\gamma(t)dt$.  We consider the random discrete Schr\"{o}dinger operator
$$H_\omega=H_0+V_\omega.$$
The finite volume approximations to $H_\omega$ are given by an increasing sequence $(B_k)_{k\geq 1}$ of finite subsets of $\mathbb{X}$, $\bigcup_{k\geq 1}B_k=\mathbb{X}$, and a corresponding sequence of operators $(H_k^{\omega})_{k\geq 1}$ approximating $H_\omega$, such that the subspace $l^2(B_k)$ is invariant for $H_k^{\omega}$. We are interested in the asymptotic behavior of the random eigenvalues
$$e^{\omega,k}_1\leq e^{\omega,k}_2\leq\cdots\leq e^{\omega,k}_{\abs{B_k}},$$
of $H_k^{\omega}\upharpoonright l^2(B_k)$ as $k\To\infty$. Usually, the first step is to prove that there is a nonrandom probability measure $\mu^{av}$ on $\Real$ such that, with probability one, the random normalized eigenvalue counting measure
\begin{equation}\label{defn_eigenvaluecounting}
\mu_k^{\omega}=\abs{B_k}^{-1}\sum_{j=1}^{\abs{B_k}}\delta(e^{\omega,k}_j),
\end{equation}
converges $\mu^{av}$ in the weak-* topology as $k\To\infty$. The measure $\mu^{av}$ is called the density of states for $H_\omega$. For large $k$, the number of eigenvalues in a small interval $(e-\eps,e+\eps)$ around a point $e\in\Real$ is then typically of the order of $\abs{B_k}\mu^{av}((e+\eps,e-\eps))$. The fine eigenvalue statistics near $e$ are then captured by the rescaled point measure 
\begin{equation}\label{defn_rescaledmeasure}
\xi_k^{\omega,e}=\sum_{i=1}^{\abs{B_k}}\delta(\abs{B_k}(e^{\omega,k}_i-e)).
\end{equation}
Minami's technique \cite{Mi} is a method allowing to prove that, in appropriate situations, $\xi_k^{\omega,e}$ is asymptotically a Poisson point process as $k\To\infty$. This means that for disjoint Borel sets $A_1,A_2,\cdots,A_m\subset\Real$, the corresponding numbers of rescaled eigenvalues in each of the sets, $$\xi_k^{\omega,e}(A_1), \xi_k^{\omega,e}(A_2),\cdots, \xi_k^{\omega,e}(A_m),$$ are approximately independent Poisson random variables and hence the eigenvalues near $e$ are uncorrelated.

Minami originally considered the Anderson tight-binding model on $\zd$. In this case $\mathbb{X}=\zd$ and $H_0$ is the discrete Laplacian: 
\begin{equation}\label{def_discretelaplacian}
(H_0\psi)(x)=\sum_{\abs{y-x}=1}\psi(y),\qquad\psi\in l^2(\zd),x\in\zd,
\end{equation}
where $\abs{x-y}=\sum_{j=1}^{\mathrm{d}}\abs{x_j-y_j}$. He proved Poisson statistics of eigenvalues in the localized regime (\cite{Mi,KN}). Minami's method has its origins in Molchanov's paper \cite{M1}, where the first rigorous proof of the absence of energy level repulsion is given for a continuous one-dimensional model. After Minami's paper \cite{Mi}, the technique and its variations have been used to prove Poisson statistics of eigenvalues for different models \cite{AW,BHS,KN,KS,S}. In this paper, we combine existing and new results to prove Poisson statistics of eigenvalues for the hierarchical Anderson model (the precise definition of the model and the statement of our results are given in section 3).
 
The probabilistic part of Minami's technique shared by most models is based on the theory of infinitely divisible point processes. As a result, one sometimes has to go though a substantial body of material also concerned with other questions e.g. \cite{Ka, DV} in order to extract the necessary results. One of our goals is to give a self-contained  elementary exposition of the probabilistic part, only assuming standard material taught in a first graduate course on probability. The spectral part of the technique is based on decoupling, i.e. on approximating $H_\omega^k$ by a direct sum of a large number of statistically independent infinitesimal components. The analysis is specific to each model and the decoupling is possible only in an appropriate regime.

In section 2, we discuss the necessary probabilistic preliminaries on Poisson point processes. In section 3, we introduce the hierarchical Anderson model and we provide a complete proof of Poisson statistics of eigenvalues in the regime where the model has spectral dimension $\mathrm{d}<1$. In the appendix, we outline, within our framework, Minami's original proof of Poisson statistics of eigenvalues for the Anderson model on $\zd$ in the localized regime.

\textbf{Acknowledgements.} We are grateful to Vojkan Jaksic for suggesting this research project. We benefited from discussions with Michael Aizenman, Vojkan Jaksic, Rowan Killip, Stas Molchanov and Mihai Stoiciu. This work was supported by FQRNT, ISM and McGill Majors grants. 
\section{Probabilistic Preliminaries}

\subsection{Why the Poisson distribution.} 
The Poisson distribution with parameter $\la$ is the discrete probability measure $\Prob_\la$ on $\Nat=\set{0,1,2,\cdots}$ given by
$$\Prob_\la=e^{-\la}\sum_{r\in\Nat}\frac{\la^r}{r!}\delta(r).$$
The simplest example where the Poisson distribution appears naturally in connection with the rescaled measure $\xi_k^{\omega,e}$ is the trivial case of a random discrete Schr\"{o}dinger operator: $\mathbb{X}=\set{1,2,\cdots}$,$H_0=0$ and the finite volume approximations are $B_k=\set{1,\cdots,k}$, $H_k^{\omega}=H_\omega\upharpoonright l^2(B_k)$. Then $H_k^{\omega}\upharpoonright l^2(B_k)$ has statistically independent eigenvalues $\set{\omega(x)}_{x\in B_k}$ and it follows from Kolmogorov's strong law of large numbers that for every Borel set $A\subset\Real$, 
$$\lim_{k\To\infty}\mu_k^{\omega}(A)=\mu^{av}(A)=\int_A \gamma(t)dt,$$
for $\Prob$-a.e. $\omega\in\Omega$. 

We denote by $\Leb$ the Lebesgue measure on $\Real$. Let us assume that $\gamma$ is continuous at a point $e\in\Real$ and that $\gamma(e)>0$. If $A_1,A_2,\cdots, A_m\subset\Real$ are disjoint bounded Borel sets, then the random vector
$$[\xi_k^{\omega,e}(A_1), \xi_k^{\omega,e}(A_2),\cdots, \xi_k^{\omega,e}(A_m)],$$
has a multinomial distribution

$$\Prob\set{\xi_k^{\omega,e}(A_1)=r_1,\xi_k^{\omega,e}(A_2)=r_2,\cdots,\xi_k^{\omega,e}(A_m)=r_m}$$
$$=\frac{k!}{r_1!r_2!\cdots r_{m+1}!}q_{k,1}^{r_1}q_{k,2}^{r_2}\cdots q_{k,m+1}^{r_{m+1}},\qquad r_s=0,\cdots,k,\sum_{s=1}^{m+1}r_s=k,$$
where
$$q_{k,s}=\Prob\set{k(\omega(1)-e))\in A_s} =\int_{e+k^{-1}A_s}\gamma(t)dt,\qquad s=1,\cdots,m+1,$$
and $A_{m+1}=\Real\backslash(\bigcup_{s=1}^{m}A_s)$. Continuity of $\gamma$ at $e$ yields that $$\lim_{k\To\infty}k q_{k,s}=\gamma(e)\Leb(A_s),$$ and hence
$$\lim_{k\To\infty}\Prob\set{\xi_k^{\omega,e}(A_1)=r_1,\xi_k^{\omega,e}(A_2)=r_2,\cdots,\xi_k^{\omega,e}(A_m)=r_m}=\prod_{s=1}^{m}\Prob_{\la_s}(\set{r_s}),$$
with $\la_s=\gamma(e)\Leb(A_s)$.  Hence the random variables $\xi_k^{\omega,e}(A_s),s=1,\cdots, m$ are asymptotically independent and have Poisson distributions $\Prob_{\la_s}$. 

In nontrivial situations, the operator $H_0\neq 0$ introduces statistical dependence to eigenvalues of $H_k^{\omega}\upharpoonright l^2(B_k)$ and therefore the analysis of the rescaled measure $\xi_k^{\omega,e}$ is more involved. However, if the dependence introduced by $H_0$ is not too big in a suitable sense, then Minami's method allows to show that $\xi_k^{\omega,e}(A_s),s=1,\cdots, m$ are still asymptotically independent Poisson random variables. In the next subsection, we discuss a general limit theorem needed for Minami's method.
\subsection{Poisson point process and Grigelionis' limit theorem}
Although $\xi_k^{\omega,e}$ as well as the other measures of interest to us are on $\Real$, we discuss, for sake of clarity, the general situation of random point measures on a metric space $S$. We equip $S$ with the Borel $\sigma$-algebra $\cB_S$, i.e. the
$\sigma$-algebra generated by open sets. We denote by $\cM$ the set of all nonnegative Borel measures $\mu$ on $(S,\cB_S)$ such that $\mu(A)<\infty$ for every bounded Borel set $A\subset S$. A measure $\mu\in\cM$ is called a point measure if $\mu$ can be written in the form
$$\mu=\sum_{j\in J} \delta(x_j), \qquad x_j\in S,$$
where $J$ is a countable index set. If $\mu\in\cM$, then we must have $\mu(A)\in\mathbb{N}$ for every bounded Borel $A\subset S$. We denote by $\cM_p$ the set of all
point measures on $(S,\cB_S)$. A point process on $S$ is map $\omega\To \mu^{\omega}$ from some probability space $(\Omega,\cF,\Prob)$ to $\cM_p$ such that for every  bounded Borel $A\subset S$, the map $\omega\To \mu^{\omega}(A)$ is measurable. If $\mu^\omega$ is a point process, then the map
$$\nu(B)=\Exp\mu^\omega(B), \qquad B\in\cB_S,$$
defines a measure on $(S,\cB_S)$. The measure $\nu$ is called the intensity measure of the process $\mu^\omega$.
\begin{defn}\label{defnPoisson} Let $\nu\in\cM$. A Poisson point process on $S$ with intensity $\nu$ is a point process $\xi^\omega$ with the following properties:
\begin{enumerate}
\item for every bounded Borel set $A\subset S$, the random variable $\xi^\omega(A)$ has a Poisson distribution with parameter $\nu(A)$.
\item given disjoint bounded Borel sets $A_1,A_2,\cdots, A_m$ in $S$, the
random variables $\xi^\omega(A_1),\xi^\omega(A_2),\cdots,\xi^\omega(A_m)$ are independent.
\end{enumerate}
\end{defn}

It can be shown \cite{Ki} that given any $\nu\in \cM$, there exists a Poisson process on $S$  with intensity $\nu$, constructed on a suitable probability space.
Poisson point process is an idealized model of noninteraction and the point process $\xi_k^{\omega,e}$ in the study of eigenvalue statistics never exactly verifies conditions (1) and (2) of definition \ref{defnPoisson}.
\begin{defn} A sequence $\xi_k^{\omega}$ of point processes on $S$, defined on the same probability space, is said to converge to a Poisson point process on $S$ with intensity $\nu\in\cM$ if for any given disjoint bounded Borel sets $A_1,A_2,\cdots,A_m$ in $S$, we have
\begin{equation}\label{DefConvergenceToPoisson}
\lim_{k\To\infty}\Prob\set{\xi_k^\omega(A_1)=r_1,\xi_k^\omega(A_2)=r_2,\cdots,\xi_k^\omega(A_m)=r_m}=\prod_{s=1}^{m}\Prob_{\nu(A_s)}(\set{r_s}),
\end{equation}
for all $r_1,r_2,\cdots,r_m\in\Nat$.
\end{defn}
Hence, in the previous subsection, the sequence of point processes $\xi_k^{\omega,e}$ on $\Real$ converges to a Poisson process on $\Real$ with intensity $\gamma(e)\Leb$. In general, it can be difficult to verify the condition \eqref{DefConvergenceToPoisson} directly and it is more convenient to verify an equivalent condition in terms of the characteristic functions, namely
\begin{equation}\label{DefConvergenceToPoissonFourier}
\lim_{k\To\infty}\Exp e^{i\sum_{s=1}^{m} t_s\xi_k^\omega(A_s)}=\prod_{s=1}^{m}\exp\of{\nu(A_s)(e^{it_s}-1)},
\end{equation}
for all $t_1,t_2,\cdots,t_m\in\Real$. Both \eqref{DefConvergenceToPoisson} and \eqref{DefConvergenceToPoissonFourier} are equivalent to the usual definition of convergence in law for random vectors in $\Nat^m$.

The basic limit theorem guaranteeing the convergence of a sequence of point processes to a Poisson point processes is due to Griegelionis \cite{G}. Originally formulated for step processes on $\Real$, Grigelionis' theorem remains valid in more general settings and in our case it translates to:
\begin{thm}\label{Grigelionis} (Grigelionis, 1963) Let $(n_k)_{k\geq 1}$ be a natural subsequence, let for each $k\geq 1$, 
$\xi_{k,1}^{\omega},\xi_{k,2}^{\omega},\cdots,\xi_{k,n_k}^{\omega}$ be independent point processes on $S$ and let  
$$\xi_k^{\omega}=\sum_{j=1}^{n_k}\xi_{k,j}^{\omega}.$$
Let $\nu\in\cM$ and assume that  for every bounded Borel set $A\subset S$, we have

$$(1)\qquad \lim_{k\To\infty}\max _{1\leq j\leq n_k}\Prob\set{\xi_{k,j}^{\omega}(A)\geq 1}=0,$$

$$(2)\qquad \lim_{k\To\infty}\sum_{j=1}^{n_k}\Prob\set{\xi_{k,j}^{\omega}(A)\geq 1}=\nu(A),$$
and
$$(3)\qquad \lim_{k\To\infty}\sum_{j=1}^{n_k}\Prob\set{\xi_{k,j}^{\omega}(A)\geq 2}=0.$$
Then $\xi_k^{\omega}$ converges to a Poisson point process on $S$ with intensity $\nu$.
\end{thm}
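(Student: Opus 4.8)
The plan is to verify the characteristic-function criterion \eqref{DefConvergenceToPoissonFourier}: given disjoint bounded Borel sets $A_1,\dots,A_m$ and reals $t_1,\dots,t_m$, set $B=\bigcup_{s=1}^m A_s$ and, for each summand process, write $Y_{k,j}=\sum_{s=1}^m t_s\xi_{k,j}^\omega(A_s)$, so that $\sum_s t_s\xi_k^\omega(A_s)=\sum_{j=1}^{n_k}Y_{k,j}$. Because the $\xi_{k,j}^\omega$ are independent, $\Exp e^{i\sum_s t_s\xi_k^\omega(A_s)}=\prod_{j=1}^{n_k}\Exp e^{iY_{k,j}}$. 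Taking logarithms, it suffices to show $\sum_{j=1}^{n_k}\bigl(1-\Exp e^{iY_{k,j}}\bigr)\To\sum_{s=1}^m\nu(A_s)(1-e^{it_s})$ and that the usual ``$\log(1+z)\approx z$'' replacement is legitimate; the latter follows once we know $\max_j\abs{1-\Exp e^{iY_{k,j}}}\To 0$, which I would derive from hypothesis (1) applied to $A=B$ via $\abs{1-\Exp e^{iY_{k,j}}}\leq 2\Prob\set{\xi_{k,j}^\omega(B)\geq 1}$.

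The analytic heart is the estimate of $1-\Exp e^{iY_{k,j}}$. On the event $\set{\xi_{k,j}^\omega(B)=0}$ we have $Y_{k,j}=0$, so $e^{iY_{k,j}}-1=0$ there. On the event $E_{k,j}:=\set{\xi_{k,j}^\omega(B)=1}$ there is exactly one point, lying in a unique $A_s$, so $e^{iY_{k,j}}-1=e^{it_s}-1$ on the sub-event $E_{k,j}\cap\set{\xi_{k,j}^\omega(A_s)=1}$. Hence
\begin{equation*}
\Exp e^{iY_{k,j}}-1=\sum_{s=1}^m (e^{it_s}-1)\,\Prob\set{\xi_{k,j}^\omega(A_s)=1}+\rho_{k,j},
\end{equation*}
where the remainder $\rho_{k,j}$ is supported on $\set{\xi_{k,j}^\omega(B)\geq 2}$ and satisfies $\abs{\rho_{k,j}}\leq 2\Prob\set{\xi_{k,j}^\omega(B)\geq 2}$. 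Summing over $j$ and using hypothesis (3) with $A=B$ kills $\sum_j\rho_{k,j}$. It remains to handle $\sum_j\Prob\set{\xi_{k,j}^\omega(A_s)=1}$: write $\Prob\set{\xi_{k,j}^\omega(A_s)=1}=\Prob\set{\xi_{k,j}^\omega(A_s)\geq 1}-\Prob\set{\xi_{k,j}^\omega(A_s)\geq 2}$, apply hypothesis (2) with $A=A_s$ to the first term and hypothesis (3) with $A=A_s$ to the second, so that $\sum_j\Prob\set{\xi_{k,j}^\omega(A_s)=1}\To\nu(A_s)$. Assembling the pieces gives $\sum_{j=1}^{n_k}\bigl(\Exp e^{iY_{k,j}}-1\bigr)\To\sum_{s=1}^m(e^{it_s}-1)\nu(A_s)$, and then $\prod_j\Exp e^{iY_{k,j}}=\exp\bigl(\sum_j\log(1+(\Exp e^{iY_{k,j}}-1))\bigr)\To\exp\bigl(\sum_s\nu(A_s)(e^{it_s}-1)\bigr)$, which is \eqref{DefConvergenceToPoissonFourier}.

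The main obstacle is not any single estimate but making the ``triangular-array logarithm'' step rigorous with complex terms: one needs a clean inequality of the form $\abs{\log(1+z)-z}\leq C\abs{z}^2$ valid for $\abs z\leq 1/2$, together with the uniform smallness $\max_j\abs{\Exp e^{iY_{k,j}}-1}\To 0$ and a uniform bound $\sum_j\abs{\Exp e^{iY_{k,j}}-1}=O(1)$ (which follows from hypothesis (2) summed over $s$, since $\abs{\Exp e^{iY_{k,j}}-1}\leq 2\sum_s\Prob\set{\xi_{k,j}^\omega(A_s)\geq1}$), so that the accumulated error $\sum_j\abs{\log(1+z_{k,j})-z_{k,j}}\leq C\,(\max_j\abs{z_{k,j}})\sum_j\abs{z_{k,j}}\To 0$. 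I would state this elementary complex-analytic lemma separately and then the convergence of point processes follows by the equivalence of \eqref{DefConvergenceToPoisson} and \eqref{DefConvergenceToPoissonFourier} already recorded in the text.
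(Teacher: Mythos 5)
Your proposal is correct and follows essentially the same route as the paper's proof: pass to characteristic functions, isolate the Poisson contribution by decomposing according to whether $\xi_{k,j}^\omega(B)$ equals $0$, $1$, or is $\geq 2$, and control the passage from sums to products via the estimate $\abs{\log(1+z)-z}\leq C\abs{z}^2$ together with uniform smallness from hypothesis (1) and $O(1)$-boundedness from hypothesis (2). The only slip is cosmetic: since your $\rho_{k,j}$ must absorb both the error from $\Exp[(e^{iY_{k,j}}-1)1_{\{\xi_{k,j}(B)\geq 2\}}]$ and the replacement of $\Prob\{\xi_{k,j}(B)=1,\ \xi_{k,j}(A_s)=1\}$ by $\Prob\{\xi_{k,j}(A_s)=1\}$ for each $s$, the correct bound is $\abs{\rho_{k,j}}\leq (2m+2)\Prob\{\xi_{k,j}(B)\geq 2\}$ rather than $2\Prob\{\xi_{k,j}(B)\geq 2\}$; the constant is harmless and the conclusion is unaffected.
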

Theorem \ref{Grigelionis} is well-known and can be found in the literature e.g. \cite{DV, Ka} as a corollary of more general results on point processes. For completeness, we include a self-contained proof here, following the original arguments of \cite{G}.
\begin{proof}
We use the standard notation $ab=\sum_{s=1}^{m}a_sb_s$, for $a,b\in \Real^m$ and $\abs{\al}=\sum_{s=1}^{m}\al_s$ for $\al\in\mathbb{N}^m$. We denote by $\set{e_s}_{s=1}^{m}$ the standard basis vectors of $\Real^m$. Let $A_1,A_2,\cdots,A_m$ be given disjoint bounded Borel sets in $S$. Let $X_k^\omega$ be the random vector
$$X_k^\omega=[\xi_k^\omega(A_1),\xi_k^\omega(A_2),\cdots,\xi_k^\omega(A_m)],$$
and let $\phi_k:\Real^m\To\Comp$ be the corresponding characteristic function
$$\phi_k(t)=\Exp e^{itX_k^{\omega}},\qquad t\in \Real^m.$$
According to \eqref{DefConvergenceToPoissonFourier}, we have to show that for all $t\in\Real^m$,
\begin{equation}\label{eqproofGrig}
\lim_{k\To\infty}\phi_{k}(t)=\prod_{s=1}^{m}\exp\of{\nu(A_s)(e^{it_s}-1)}.
\end{equation}
We set
$$X_{k,j}^\omega=[\xi_{k,j}^\omega(A_1),\xi_{k,j}^\omega(A_2),\cdots,\xi_{k,j}^\omega(A_m)],$$
$$\phi_{k,j}(t)=\Exp e^{itX_{k,j}^{\omega}},\qquad t\in \Real^m,$$
and $$A=\bigcup_{s=1}^m A_s.$$ By assumption (1), there is a $k_0$ such that for $k\geq k_0$, $$\max _{1\leq j\leq n_k}\Prob\set{\xi_{k,j}^{\omega}(A)\geq 1}<1/4.$$ Hence for $k\geq k_0$ and $1\leq j\leq n_k$,
$$\abs{\sum_{\abs{\al}\geq 1}\Prob\set{ X_{k,j}^\omega=\al}(e^{i\al
t}-1)}\leq 2 \sum_{\abs{\al}\geq 1}\Prob\set{ X_{k,j}^\omega=\al}=2\Prob\set{\xi_{k,j}^\omega(A)\geq 1}<1/2,$$
and we can write

\begin{equation}
\begin{split}
\phi_{k,j}(t)&=1+\sum_{\abs{\al}\geq 1}\Prob\set{X_{k,j}^{\omega}=\al}(e^{i\al t}-1)
\\ &=\exp\of{\sum_{\abs{\al}\geq 1}\Prob\set{X_{k,j}^{\omega}=\al}(e^{i\al t}-1)+E_{k,j}},
\end{split}
\end{equation}
where $$E_{k,j}=f\of{\sum_{\abs{\al}\geq 1}\Prob\set{X_{k,j}^{\omega}=\al}(e^{i\al
t}-1)},$$
and $f(z)=\log(1+z)-z$. The function $f$ is analytic in the open disk $\set{\abs{z}<1}$ and 
\begin{equation}\label{boundforlog}
\abs{f(z)}\leq C\abs{z}^2\qquad \textrm{ for } \abs{z}<1/2,
\end{equation}
where $0<C<\infty$ is a numerical constant. Next, we write
\begin{equation}
\begin{split}
\sum_{\abs{\al}\geq 1}\Prob\set{X_{k,j}^{\omega}=\al}(e^{i\al t}-1)&=\sum_{\abs{\al}=1}\Prob\set{X_{k,j}^{\omega}=\al}(e^{i\al t}-1) + F_{k,j}
\\
&=\sum_{s=1}^{m}\Prob\set{X_{k,j}^{\omega}=e_s}(e^{i t_s}-1) + F_{k,j}
\\
&=\sum_{s=1}^{m}\Prob\set{\xi_{k,j}^{\omega}(A_s)=1}(e^{i t_s}-1) + G_{k,j}+F_{k,j}
,
\end{split}
\end{equation}
where
$$F_{k,j}=\sum_{\abs{\al}\geq
2}\Prob\set{X_{k,j}^{\omega}=\al}(e^{i\al t}-1),$$
and
$$G_{k,j}=\sum_{s=1}^{m}\of{\Prob\set{X_{k,j}^{\omega}=e_s}-\Prob\set{\xi_{k,j}^{\omega}(A_s)=1}}(e^{it_s}-1).$$
Hence, 
$$\phi_{k,j}(t)=\exp\of{\sum_{s=1}^{m}\Prob\set{\xi_{k,j}^{\omega}(A_s)=1}(e^{i t_s}-1)+H_{k,j}},$$
where
$$H_{k,j}=E_{k,j}+F_{k,j}+G_{k,j}.$$
We then have, by independence, that
\begin{equation}\label{eq_poiss_factor}
\begin{split}
\phi_k(t)&=\prod_{j=1}^{n_k}\phi_{k,j}(t)
\\
&=\exp\of{\sum_{s=1}^{m}\of{\sum_{j=1}^{n_k}\Prob\set{\xi_{k,j}^{\omega}(A_s)=1}}(e^{i
t_s}-1)+\sum_{j=1}^{n_k}H_{k,j}}
\end{split}
\end{equation}
The assumptions (2) and (3) imply that
\begin{equation}\label{eqstar}
 \lim_{k\To\infty}\sum_{j=1}^{n_k}\Prob\set{\xi_{k,j}^{\omega}(A_s)=1}=\nu(A_s).
 \end{equation}
We claim that
\begin{equation}\label{eqstarstar}
\lim_{k\To\infty}\sum_{j=1}^{n_k}H_{k,j}=0.
\end{equation}
If \eqref{eqstarstar} holds, then \eqref{eqstar}, \eqref{eqstarstar} and \eqref{eq_poiss_factor} together yield the desired conclusion \eqref{eqproofGrig}
and we are done. We now prove \eqref{eqstarstar}. We have
\begin{equation}\label{eqboundF}
\abs{F_{k,j}}\leq 2\Prob\set{\xi_{k,j}^{\omega}(A)\geq 2},
\end{equation}
and the bound \eqref{boundforlog} yields
\begin{equation}\label{eqboundE}
\abs{E_{k,j}}\leq C\of{2\sum_{\abs{\al}\geq 1}\Prob\set{X_{k,j}^{\omega}=\al}}^2= 4C \of{\Prob\set{\xi_{k,j}^{\omega}(A)\geq 1}}^2.
\end{equation}
To estimate $\abs{G_{k,j}}$, note that
$$\set{X_{k,j}^{\omega}=e_s}\subset \set{\xi_{k,j}^{\omega}(A_s)=1},$$
and
$$\of{ \set{\xi_{k,j}^{\omega}(A_s)=1}\backslash \set{X_{k,j}^{\omega}=e_s}}\subset \set{\xi_{k,j}^{\omega}(A)\geq 2}.$$
Hence
\begin{equation}\label{eqboundG}
\abs{G_{k,j}}\leq 2m \Prob\set{\xi_{k,j}^{\omega}(A)\geq 2}.
\end{equation}
We now combine the bounds \eqref{eqboundE}, \eqref{eqboundF} and \eqref{eqboundG} to get
\begin{equation*}
\begin{split}
\abs{\sum_{j=1}^{n_k}H_{k,j}}\leq & (2m+2)\sum_{j=1}^{n_k}\Prob\set{\xi_{k,j}^{\omega}(A)\geq 2}\\
&+4C\of{\max_{1\leq j\leq n_k}\Prob\set{\xi_{k,j}^{\omega}(A)\geq 1}}\sum_{j=1}^{n_k}\Prob\set{\xi_{k,j}^{\omega}(A)\geq 1}.
\end{split}
\end{equation*}
The assumptions (1),(2) and (3) imply that the right hand side of last inequality
converges to zero as $k\To\infty$, completing the proof.
\end{proof}

\subsection{Corollaries of Grigelionis' limit theorem}
For the point processes $\xi^\omega$ on $S=\Real$ arising in the study of eigenvalue statistics, it is sometimes more natural to obtain information about the Poisson integrals $\int_\Real \im(t-z)^{-1}d\xi^\omega(t)$, $\im z>0$, rather than about the events $\set{\xi^\omega(A)\geq 1}$ and $\set{\xi^\omega(A)\geq 2}$. In this subsection, we replace the conditions (2) and (3) of Theorem \ref{Grigelionis} by sufficient conditions in terms of the Poisson integrals. We refer the reader to \cite{J} for the general theory of Poisson integrals and their applications to spectral theory.

For a positive Borel measure $\mu$ on $S$ and a Borel function
$f:S\To [0,\infty)$, we set $$\cI(\mu,f)=\int_{t\neq
t'}f(t)f(t')d\mu(t) d\mu(t').$$ If $\mu=\sum_{j}\delta(t_j)$ is a
point measure on $S$ and $f(t)=1_A(t)$ is the indicator function of a bounded Borel set $A\subset S$, then we have
$$\cI(\mu,1_A)=\sum_{i\neq j}1_A(t_i)1_A(t_j)=\mu(A)(\mu(A)-1),$$
and therefore $\cI(\mu,1_A)\neq 0\Leftrightarrow \mu(A)\geq 2.$ If
$\xi^\omega$ is a point process on $S$, then 
\begin{equation*}
\begin{split}
\sum_{l\geq 2}\Prob\set{\xi^\omega(A)\geq l}&=\sum_{l\geq 2}(l-1)\Prob\set{\xi^\omega(A)=l}
\\
&\leq \sum_{l\geq
2}l(l-1)\Prob\set{\xi^\omega(A)=l}
\\
&=\Exp \cI(\xi^\omega,1_A).
\end{split}
\end{equation*}
Since
$$\Prob\set{\xi^\omega(A)\geq 1}=\Exp\xi^\omega(A)-\sum_{l\geq 2}\Prob\set{\xi^\omega(A)\geq l},$$
we conclude that the conditions
$$(2')\qquad \lim_{k\To\infty}\sum_{j=1}^{n_k}\Exp\xi_{k,j}^{\omega}(A)=\nu(A),$$
and
$$(3')\qquad \lim_{k\To\infty}\sum_{j=1}^{n_k}\Exp \cI(\xi_{k,j}^\omega,1_A)=0,$$
together imply conditions (2) and (3) of Theorem \ref{Grigelionis}. The next step is to replace, in (2') and (3'), the quantity $\Exp\xi_{k,j}^{\omega}(A)$ by $\Exp\int f d\xi_{k,j}^{\omega}$ for $f$ in a sufficiently rich class of functions $F$.
\begin{thm}
For each $k\geq 1$, let $\xi_{k,1}^{\omega}, \xi_{k,2}^{\omega},\cdots,\xi_{k,n_k}^{\omega}$ be point processes on $S$ and let $\xi_k^{av}=\sum_{j=1}^{n_k}\Exp\xi_{k,j}^{\omega}$. Let $\nu\in\cM$. Suppose that there is a measure $\mu\in\cM$ s.t. that $\nu$ and $(\xi_k^{av})_{k\geq 1}$ are absolutely continuous with respect to $\mu$, with uniformly bounded densities, i.e. there is a constant $0<C<\infty$ such that for all bounded Borel sets $A\subset S$,
 $$\nu(A)\leq C \mu(A),$$ and  
 $$\xi_k^{av}(A) \leq C \mu(A),\qquad k\geq 1.$$
Suppose that $F\subset L_1(S,\mu)$ is a family of functions such
that finite linear combinations of functions in $F$ are dense in
$L_1(S,\mu)$ and such that for every bounded Borel set $A\subset S$, there exists $f\in F$ with $f\geq 1_A$. 
Suppose that for all $f\in\cF$, we have 
$$(2'')\qquad \lim_{k\To\infty}\int f d\xi_k^{av}=\int f d\nu,$$
and
$$(3'')\qquad \lim_{k\To\infty}\sum_{j=1}^{n_k}\Exp \cI(\xi_{k,j}^{\omega},f)=0.$$
Then (2') and (3') hold for all bounded Borel sets $A\subset S$.
\end{thm}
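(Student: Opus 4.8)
The plan is to treat the two conclusions separately, since the two hypotheses on $F$ serve two different purposes: the requirement that for each bounded Borel $A$ there is $f\in F$ with $f\ge 1_A$ is what is needed for (3'), while the density of $\mathrm{span}(F)$ in $L_1(S,\mu)$ is what is needed for (2'). Condition (3') is the quick one. The point is that $g\mapsto\cI(\rho,g)$ is monotone for every positive Borel measure $\rho$: if $0\le g\le h$ then $0\le g(t)g(t')\le h(t)h(t')$ pointwise, and integrating over $\set{t\ne t'}$ against the positive product measure gives $\cI(\rho,g)\le\cI(\rho,h)$. So, given a bounded Borel set $A$, I would pick $f\in F$ with $f\ge 1_A$ and note that $\cI(\xi_{k,j}^\omega,1_A)\le\cI(\xi_{k,j}^\omega,f)$ for every $\omega$; taking expectations and summing over $j$ yields
$$0\le\sum_{j=1}^{n_k}\Exp\,\cI(\xi_{k,j}^\omega,1_A)\le\sum_{j=1}^{n_k}\Exp\,\cI(\xi_{k,j}^\omega,f)\To 0$$
by (3''), which is exactly (3').

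For (2'), observe that $\sum_{j=1}^{n_k}\Exp\xi_{k,j}^\omega(A)=\xi_k^{av}(A)=\int 1_A\,d\xi_k^{av}$, so (2') is the assertion $\int 1_A\,d\xi_k^{av}\To\int 1_A\,d\nu$. Since $\nu$ and every $\xi_k^{av}$ have densities with respect to $\mu$ bounded by the single constant $C$, for every nonnegative Borel $g$ one has $\int g\,d\xi_k^{av}\le C\int g\,d\mu$ and $\int g\,d\nu\le C\int g\,d\mu$; hence $L_k(g):=\int g\,d\xi_k^{av}$ and $L(g):=\int g\,d\nu$ define linear functionals on $L_1(S,\mu)$ with $\norm{L_k},\norm{L}\le C$ uniformly in $k$. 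By (2'') and linearity, $L_k\To L$ pointwise on the linear span $\cD$ of $F$, which is dense in $L_1(S,\mu)$ by hypothesis. I would then invoke the standard fact that a uniformly norm-bounded sequence of functionals converging on a dense subspace converges on the whole space (given $g$ and $\eps>0$, approximate $g$ by $h\in\cD$ within $\eps$ and estimate $\abs{L_k(g)-L(g)}\le 2C\eps+\abs{L_k(h)-L(h)}$, then let $k\To\infty$ and $\eps\To 0$). Applying this to $g=1_A$, which lies in $L_1(S,\mu)$ because $\mu(A)<\infty$, gives $\xi_k^{av}(A)=L_k(1_A)\To L(1_A)=\nu(A)$, i.e. (2').

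I do not expect a genuine obstacle; the argument is essentially density/monotonicity bookkeeping. The one place to be slightly careful is extracting the uniform operator-norm bound on $L_1(S,\mu)$: if one reads the hypothesis as the inequality $\xi_k^{av}(A)\le C\mu(A)$ on bounded Borel sets rather than directly as a bound on Radon--Nikodym densities, one first extends it to all Borel sets by continuity from below (using that $S$ is $\sigma$-compact, as it is in every application) and then to all nonnegative $g\in L_1(S,\mu)$ by approximation with simple functions supported on bounded Borel sets. Everything else — the monotonicity of $\cI$ and the elementary functional-analytic limit — is immediate, and the split "(2'')$\Rightarrow$(2')" and "(3'')$\Rightarrow$(3')" is clean.
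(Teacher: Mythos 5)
Your proof is correct and follows essentially the same route as the paper's: both deduce (3') from the pointwise monotonicity $\cI(\xi_{k,j}^\omega,1_A)\le\cI(\xi_{k,j}^\omega,f)$ for $f\ge 1_A$, and both deduce (2') by a density argument combined with the uniform bound $C$, which you package as convergence of uniformly bounded functionals on $L_1(S,\mu)$ while the paper writes out the $\liminf/\limsup$ squeeze directly. The caveat you raise about promoting the inequality $\xi_k^{av}(A)\le C\mu(A)$ from bounded Borel sets to a bound on $\abs{\int h\,d\xi_k^{av}}$ for general $h\in L_1(S,\mu)$ is a fair point of care that the paper glosses over; your fix via $\sigma$-boundedness of $S$ (automatic for separable metric $S$, and certainly for $S=\Real$ where the theorem is applied) is the right one.
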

\begin{proof} Let $A$ be a bounded Borel set. Let $\eps>0$. There is a finite linear combination $g=\sum_ic_i f_i$, $f_i\in F$, with $\int\abs{g-1_A}d\mu<\eps$. Then $\abs{\int g d\nu-\nu(A)}<C\eps$ and $\abs{\int g d\xi_k^{av}-\xi_k^{av}(A)}<C\eps$. Since $\lim_{k\To\infty}\int g d\xi_k^{av}=\int gd\nu$, we have
$$\nu(A)-2C\eps\leq \liminf_{k\To\infty}\xi_k^{av}(A)\leq\limsup_{k\To\infty}\xi_k^{av}(A)\leq \nu(A)+2C\eps,$$
and (2') is obtained after letting $\eps\downarrow 0$. Now let $f\in F$ be such that $f\geq 1_A$. Since, $\cI(\xi_{k,j}^\omega, 1_A)\leq I(\xi_{k,j}^\omega,f)$, (3') follows from (3'').
\end{proof}

The special case when $S=\Real$, $\mu=\Leb$ is the Lebesgue measure on $\Real$, $\nu=\la\Leb$ for a $\la>0$ and $F$ is the family of functions
 $\set{\im (t-z)^{-1}}_{\im z>0}$ yields 
\begin{thm}\label{GrigelionisPoisson} Let $(n_k)_{k\geq 1}$ be a natural subsequence, let for each $k\geq 1$, 
$\xi_{k,1}^{\omega},\xi_{k,2}^{\omega},\cdots,\xi_{k,n_k}^{\omega}$ be independent point processes on $\Real$ and let  
$$\xi_k^{\omega}=\sum_{j=1}^{n_k}\xi_{k,j}^{\omega}.$$
We make the following four hypotheses:
\newline\indent (H0): there is a constant $0<C<\infty$ such that for all $k\geq 1$ and every bounded Borel set $A\subset \Real$,
$$\sum_{j=1}^{n_k}\Exp\xi_{k,j}^{\omega}(A)\leq C \Leb(A).$$
\newline\indent (H1): for every bounded Borel set $A\subset\Real$,
 $$\lim_{k\To\infty}\max _{1\leq j\leq n_k}\Prob\set{\xi_{k,j}^\omega(A)\geq 1}=0.$$
\newline\indent (H2): there is a constant $0<\la<\infty$ such that
for $\im z>0$,
$$\lim_{k\To\infty}\sum_{j=1}^{n_k}\Exp \int_\Real \im (t-z)^{-1}d\xi_{k,j}^\omega(t)=\pi \la.$$
\newline\indent (H3): for $\im z>0$,
$$\lim_{k\To\infty}\sum_{j=1}^{n_k}\Exp \int_{t\neq t'} \im (t-z)^{-1}\im (t'-z)^{-1}d\xi_{k,j}^\omega(t)d\xi_{k,j}^\omega(t')=0.$$
Then $\xi_k^\omega$ converges to a Poisson point process on $\Real$ with intensity $\la \Leb$.
\end{thm}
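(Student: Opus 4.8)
The plan is to deduce Theorem \ref{GrigelionisPoisson} from the preceding theorem (the one with hypotheses (2'') and (3'')), which in turn reduces to Grigelionis' Theorem \ref{Grigelionis}. First I would identify the ingredients: set $S=\Real$, $\mu=\Leb$, $\nu=\la\Leb$, and take $F=\set{\im(t-z)^{-1}}_{\im z>0}$. The absolute continuity and uniform-density requirement of the intermediate theorem is exactly (H0) together with the trivial bound $\nu(A)=\la\Leb(A)$, so that hypothesis is in hand with $C=\max\set{C,\la}$. Hypothesis (H1) is literally condition (1) of Theorem \ref{Grigelionis}, so it carries over verbatim.

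The substantive work is to verify the two analytic hypotheses of the intermediate theorem for this particular $F$. For (3''): for a point measure $\mu=\sum_j\delta(t_j)$ and $f_z(t)=\im(t-z)^{-1}$, one has $\cI(\mu,f_z)=\sum_{i\neq j}\im(t_i-z)^{-1}\im(t_j-z)^{-1}$, which is exactly the double integral in (H3); thus (3'') for $F$ is precisely (H3). For (2''): I need $\int f_z\,d\xi_k^{av}\to\int f_z\,d\nu$ for every $z$ with $\im z>0$, where $\xi_k^{av}=\sum_j\Exp\xi_{k,j}^\omega$. By definition $\int f_z\,d\xi_k^{av}=\sum_{j=1}^{n_k}\Exp\int\im(t-z)^{-1}d\xi_{k,j}^\omega(t)$, which by (H2) converges to $\pi\la$. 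On the other hand $\int f_z\,d\nu=\la\int_\Real\im(t-z)^{-1}\,dt=\pi\la$ by the standard computation $\int_\Real\im(t-z)^{-1}dt=\pi$ (the integral of the Poisson kernel). Hence the two sides agree in the limit, giving (2'').

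I then need the density and covering conditions on $F$ demanded by the intermediate theorem. The covering condition — for each bounded Borel $A$ there is $f\in F$ with $f\geq 1_A$ — follows by choosing $z=iy$ with $y$ large enough: for $t$ in a bounded set one has $\im(t-iy)^{-1}=y/(t^2+y^2)$, and on a fixed bounded interval this is bounded below by a constant multiple of $1/y$, so after noting it need not exceed $1$ one instead picks $z$ with small imaginary part near the set; concretely, for $A\subset[-R,R]$ take $z=i\eps$ so that on $[-R,R]$, $\im(t-i\eps)^{-1}=\eps/(t^2+\eps^2)\geq \eps/(R^2+\eps^2)$, and choosing $\eps$ small this is still positive — one must scale so it dominates $1$, e.g. use a finite linear combination or simply observe the intermediate theorem only needs $f\geq 1_A$ which is achieved by $f_z$ multiplied by a large constant, but since $F$ must literally contain such an $f$ I would instead enlarge $F$ or appeal to the density step. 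The density of finite linear combinations of $\set{\im(t-z)^{-1}}_{\im z>0}$ in $L^1(\Real,\Leb)$ is the one genuinely nontrivial input: this is a classical fact (linear combinations of Poisson kernels are dense, e.g. by a Stone–Weierstrass or Fourier-transform argument, since $\widehat{f_z}$ spans a dense set of exponentials), and I would cite \cite{J} for it.

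The main obstacle I anticipate is the bookkeeping around the function class $F$: making the covering hypothesis ($f\geq 1_A$ for some $f\in F$) literally true with $F=\set{\im(t-z)^{-1}}$ requires care, since Poisson kernels are small, not large — the cleanest fix is to allow positive scalar multiples in $F$ (which does not affect $L^1$-density) or to re-examine the proof of the intermediate theorem, where $f\geq 1_A$ is used only to get $\cI(\xi_{k,j}^\omega,1_A)\leq\cI(\xi_{k,j}^\omega,f)$; a positive multiple $cf_z\geq 1_A$ on bounded $A$ works and (3'') is insensitive to the scalar. Once that point is dispatched, everything else is a direct substitution into the already-proved intermediate theorem, and the conclusion — convergence of $\xi_k^\omega$ to a Poisson process with intensity $\la\Leb$ — follows immediately.
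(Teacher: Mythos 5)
Your proof takes essentially the same route as the paper, which presents Theorem \ref{GrigelionisPoisson} simply as the specialization of the preceding (unnamed) theorem to $S=\Real$, $\mu=\Leb$, $\nu=\la\Leb$, and $F=\set{\im(t-z)^{-1}}_{\im z>0}$, with (H2) matching (2'') via $\int_\Real \im(t-z)^{-1}\,dt=\pi$ and (H3) matching (3'') directly. You correctly flag that the covering condition ``$f\geq 1_A$ for some $f\in F$'' fails literally for wide $A$ since Poisson kernels are bounded by $1/\im z$; your fix---use $cf_z\geq 1_A$ with $c>0$ large, which leaves (3'') unaffected because $\cI(\cdot,cf)=c^2\,\cI(\cdot,f)$---is exactly the right patch and one the paper silently elides.
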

Theorem \ref{GrigelionisPoisson} is implicitly derived in \cite{Mi} and is suitable for applications to eigenvalue statistics of general random discrete Schr\"{o}dinger operators.

\section{Poisson statistics of eigenvalues in the hierarchical Anderson model} 
\subsection{Definition of the model and its basic properties} In this subsection, we review the definition and the basic properties of the hierarchical Anderson model. For additional information, we refer the reader to \cite{K1,K2,K3,M2,M3}.
Theorems \ref{free_lap_sp} and \ref{hierarchical_sp} collect, for reference purposes, the main known results on the hierarchical Anderson model and are stated without proof. 

We consider the set $\mathbb{X}=\set{0,1,2,\dots}$. Given an integer $n\geq 2$, $\mathbb{X}$ has a metric space structure with the distance $d:\mathbb{X}\times \mathbb{X}\To[0,\infty)$
$$d(x,y)=\min\set{r: q(x,n^r)=q(y,n^r)}, $$
where $q(x,n^r)$ denotes the quotient of the division of $x$ by $n^r$. The closed ball with center $x$ and radius $r$ is denoted by 
$$B(x,r)=\set{y\in \mathbb{X}: d(x,y)\leq r}.$$ The main property of $d$ is that two closed balls of the same radius are either disjoint or identical, and that each $B(x,r+1)$ is a disjoint union of $n$ balls of radius $r$. 

For $x\in \mathbb{X}$, the unit vector $\delta_x\in l^2(\mathbb{X})$ denotes the Kronecker delta function at $x$: $\delta_x(x)=1$ and $\delta_x(y)=0$ for $y\neq x$. For each integer $r\geq 1$, we set $E_r:l^2(\mathbb{X})\To l^2(\mathbb{X})$, 
$$(E_r\psi)(x)=n^{-r}\sum_{d(y,x)\leq r}\psi(y).$$ Thus $E_r$ is the orthogonal projection onto the subspace of $l^2(\mathbb{X})$ consisting of functions that are constant on every closed ball of radius $r$.
The hierarchical Laplacian is then defined by the formula
$$\Delta=\sum_{r=1}^{\infty}p_r E_r,$$
where $(p_r)_{r\geq 1}$ is a given sequence such that $p_r>0$
and $\sum_{r=1}^\infty p_r =1$.  We assume that
$$\frac{C_1}{\rho^r}\leq p_r\leq \frac{C_2}{\rho^r},$$
for some fixed constants $\rho>1,C_1>0,C_2>0$. The number
\begin{equation}\label{dimension}
{\rm d}={\rm d}(n,\rho)=2\frac{\log n}{\log \rho},
\end{equation}
is called the spectral dimension of $\Delta$.
The following theorem \cite{K1, M3} summarizes some of the spectral features of $\Delta$.
\begin{thm}\label{free_lap_sp}
$\Delta$ is a bounded self-adjoint operator on $l^2(\mathbb{X})$ and its spectrum consists of infinitely degenerate isolated eigenvalues
$$\la_0=0,\la_1=p_1,\la_2=p_1+p_2,\la_3=p_1+p_2+p_3,\cdots$$
and of their accumulation point $\la_\infty=1$, which is not an eigenvalue. For each $x\in \mathbb{X}$,
$$\sum_{y\in \mathbb{X}}\scalp{\delta_x}{\Delta\delta_y}=1,$$
and hence $\Delta$ generates a random walk on $\mathbb{X}$.The random walk is
recurrent when $\rm{d}\leq 2$ and transient when $\rm{d}>2$. 
\end{thm}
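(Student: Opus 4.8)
The plan is to diagonalise $\Delta$ explicitly using the nesting of the projections $E_r$, to read off every spectral assertion of the theorem from that diagonalisation, and then to reduce the recurrence question to a convergence criterion for the diagonal Green function.

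First I would record the structural identities for the $E_r$. Since each closed ball of radius $r$ is a disjoint union of $n$ closed balls of radius $r-1$, any function constant on balls of radius $r$ is constant on balls of radius $r-1$; hence $\mathrm{Ran}(E_r)\subset\mathrm{Ran}(E_{r-1})$ and $E_rE_s=E_{\max(r,s)}$. Because $\mathbb{X}$ is infinite, no nonzero $l^2$-function is constant on balls of all radii, so $\bigcap_r\mathrm{Ran}(E_r)=\set{0}$ and $E_r\to 0$ strongly. Setting $P_0:=I-E_1$ and $P_s:=E_s-E_{s+1}$ for $s\geq 1$ then gives a complete system of mutually orthogonal projections, $\sum_{s\geq 0}P_s=I$ (as $\sum_{s=0}^N P_s=I-E_{N+1}$). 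Substituting $E_r=\sum_{s\geq r}P_s$ into the norm-convergent series $\Delta=\sum_{r\geq 1}p_rE_r$ and rearranging yields the spectral resolution
\[
\Delta=\sum_{s=0}^{\infty}\la_s P_s,\qquad \la_0=0,\quad \la_s=p_1+\dots+p_s\ \ (s\geq 1).
\]
From this, boundedness and self-adjointness are immediate with $\norm{\Delta}=\sup_s\la_s=1$; the spectrum is $\overline{\set{\la_s:s\geq 0}}=\set{\la_s:s\geq 0}\cup\set{1}$; each $\la_s$ is an eigenvalue with eigenspace $\mathrm{Ran}(P_s)$; and $1$ belongs to the spectrum (as $\la_s\uparrow 1$) but is not an eigenvalue, since $\ker(\Delta-I)=\set{0}$ because $\la_s\neq 1$ for every finite $s$ and $\sum_s P_s=I$. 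To see that each eigenspace is infinite dimensional I would identify $\mathrm{Ran}(P_s)$ concretely as the functions constant on every ball of radius $s$ with mean zero on every ball of radius $s+1$: there are infinitely many balls of radius $s+1$, each contributing $n-1$ independent directions.

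For the random-walk interpretation I would compute $(E_r\delta_y)(x)=n^{-r}1_{B(x,r)}(y)$, so that $\scalp{\delta_x}{\Delta\delta_y}=\sum_{r\geq d(x,y)}p_r n^{-r}>0$ and $\sum_{y\in\mathbb{X}}\scalp{\delta_x}{\Delta\delta_y}=\sum_{r\geq 1}p_r n^{-r}\abs{B(x,r)}=\sum_{r\geq 1}p_r=1$; thus $\Delta$ is a symmetric stochastic matrix, hence the transition matrix of a symmetric random walk on $\mathbb{X}$. For recurrence I would invoke the standard criterion that the walk is recurrent iff the diagonal Green function $G:=\sum_{k\geq 0}\scalp{\delta_x}{\Delta^k\delta_x}$ diverges. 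The spectral resolution gives $\scalp{\delta_x}{\Delta^k\delta_x}=\sum_s\la_s^k\norm{P_s\delta_x}^2$, and since $\norm{E_r\delta_x}^2=\scalp{\delta_x}{E_r\delta_x}=n^{-r}$ one obtains the uniform identity $\norm{P_s\delta_x}^2=(1-n^{-1})n^{-s}$; summing the geometric series in $k$ gives
\[
G=(1-n^{-1})\sum_{s=0}^{\infty}\frac{n^{-s}}{1-\la_s}.
\]
The two-sided hypothesis $C_1\rho^{-r}\leq p_r\leq C_2\rho^{-r}$ then forces $1-\la_s=\sum_{r>s}p_r\asymp\rho^{-s}$, so the general term of the last series is comparable to $(\rho/n)^s$, whence $G<\infty$ iff $\rho<n$, i.e. iff $\mathrm{d}=2\log n/\log\rho>2$.

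The one genuinely delicate point is this final comparison at the borderline $\rho=n$ (spectral dimension exactly $2$): there $(\rho/n)^s\equiv 1$, so the series diverges, and one must keep track of the explicit comparison constants coming from $C_1,C_2$ and $\rho$ to be sure that recurrence holds throughout the closed range $\mathrm{d}\leq 2$ rather than merely for $\mathrm{d}<2$. Everything else in the argument is bookkeeping with the orthogonal decomposition $\set{P_s}_{s\geq 0}$, so I expect no other obstacle.
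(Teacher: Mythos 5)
The paper states Theorem~\ref{free_lap_sp} without proof: the text preceding it says explicitly that this theorem and Theorem~\ref{hierarchical_sp} ``are stated without proof'' and refers to \cite{K1,M3}. So there is no in-paper argument to compare against; I will judge your proof on its own.

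Your argument is correct and is the standard one for this model. The nesting $\mathrm{Ran}(E_s)\subset\mathrm{Ran}(E_r)$ for $r\leq s$, together with $E_r\to 0$ strongly, gives via $P_0=I-E_1$, $P_s=E_s-E_{s+1}$ a resolution of the identity that diagonalizes $\Delta$; substituting $E_r=\sum_{s\geq r}P_s$ into the norm-convergent series for $\Delta$ and rearranging gives $\Delta=\sum_s\la_s P_s$, from which boundedness, self-adjointness, the spectrum, the infinite degeneracy of each $\la_s$, and the fact that $\la_\infty=1$ is in the spectrum but not an eigenvalue all follow cleanly. The identity $\norm{P_s\delta_x}^2=(1-n^{-1})n^{-s}$ and the geometric summation reducing recurrence to convergence of $\sum_s n^{-s}/(1-\la_s)$ are also right.

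Two small remarks. First, the boundary case $\mathrm{d}=2$ (i.e. $\rho=n$) is not as delicate as you suggest. Summing the tail in $1-\la_s=\sum_{r>s}p_r$ under the hypothesis $C_1\rho^{-r}\leq p_r\leq C_2\rho^{-r}$ gives the explicit sandwich
\[
\frac{\rho-1}{C_2}\of{\frac{\rho}{n}}^s\;\leq\;\frac{n^{-s}}{1-\la_s}\;\leq\;\frac{\rho-1}{C_1}\of{\frac{\rho}{n}}^s,
\]
and the lower bound alone already forces divergence of the Green-function series whenever $\rho\geq n$, i.e.\ $\mathrm{d}\leq 2$, with no further tracking of constants. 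Second, the criterion ``recurrent iff $G=\infty$'' needs irreducibility (and is cleanest to invoke for an aperiodic chain); you do compute that every transition probability $\scalp{\delta_x}{\Delta\delta_y}=\sum_{r\geq \max(1,d(x,y))}p_rn^{-r}$ is strictly positive (note the $\max$ with $1$, since the series for $\Delta$ starts at $r=1$; your formula is off on the diagonal), so the chain has all entries positive and in particular is irreducible and aperiodic, but it is worth stating this explicitly since it is the hypothesis under which the Green-function criterion applies.
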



The hierarchical Anderson model is the random discrete Schr\"{o}dinger operator
$$H_\omega=\Delta+V_\omega,$$
as in the framework of the introduction, with $H_0=\Delta$.
If the set $\set{\omega(x):x\in \mathbb{X}}$ is unbounded, then $V_\omega$ and $H_\omega$ are unbounded self-adjoint operators with the domain 
$$\cD_\omega=\set{\psi: \sum_{x\in \mathbb{X}}\abs{\psi(x)}^2(1+\abs{\omega(x)}^2)<\infty}.$$



\begin{thm}\label{hierarchical_sp} $H_\omega$ has the following generic spectral properties.
\newline\noindent (1) \cite{K2}
 If the support of $\gamma$ is connected, $\supp(\gamma)=[a,b]$, $-\infty\leq a<b\leq\infty$, then for $\Prob$-a.e. $\omega\in\Omega$, the spectrum of $H_\omega$ is given by
$$\Sigma=\bigcup_{r=0}^{\infty}[\la_r+a,\la_r+b].$$
\newline\noindent (2) \cite{K2} If the model has spectral dimension $\rm{d} <4$ then, for $\Prob$-a.e. $\omega\in\Omega$, the spectrum of $H_\omega$ is dense pure-point in $\Sigma$.
\newline\noindent (3) \cite{M3} For any spectral dimension $\mathrm{d}<\infty$, the same conclusion as in (2) holds provided the random variables $\omega(x)$ have a Cauchy distribution, i.e. the density $\gamma(t)$ is of the special form: 
\begin{equation}\label{Cauchy}
\gamma(t)=\frac{1}{\pi}\frac{v}{(u-t)^2+v^2},
\end{equation}
for some $u\in\Real,v>0$.
\end{thm}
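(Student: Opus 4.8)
Theorem \ref{hierarchical_sp} is explicitly flagged in the text as a compilation of known results, ``stated without proof'' and attributed to \cite{K2,M3}. So the task here is not to produce original arguments but to indicate how each of the three assertions is established in the cited literature; I outline the strategy one would use to reconstruct the proofs.

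\textbf{Part (1): the almost-sure spectrum.} The plan is to combine two standard ingredients. First, because the $\omega(x)$ are i.i.d.\ with $\supp(\gamma)=[a,b]$ and $\Delta$ is invariant under the ``hierarchical'' permutations of $\mathbb{X}$ (those bijections of $\mathbb{X}$ that preserve every ball $B(x,r)$), the operator $H_\omega$ is ergodic with respect to a suitable group action, so $\sigma(H_\omega)$ equals a deterministic set $\Sigma$ for $\Prob$-a.e.\ $\omega$ (Pastur's theorem). Second, one identifies $\Sigma$ explicitly. The inclusion $\Sigma\subseteq\bigcup_{r\geq 0}[\la_r+a,\la_r+b]$ follows from the resolvent/Weyl-sequence bound: since $\Delta$ has spectrum $\{\la_r\}\cup\{1\}$ with $1$ the only accumulation point, and $\|V_\omega\|$ is a.s.\ governed by $[a,b]$ on any finite region, a localized Weyl sequence for $H_\omega$ at energy $E$ forces $E$ to lie within distance $\sup|[a,b]|$ of some $\la_r$; a more careful spectral-decomposition argument using the eigenprojections of $\Delta$ sharpens this to the stated union of intervals. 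For the reverse inclusion, given $E=\la_r+v$ with $v\in[a,b]$, one constructs approximate eigenfunctions supported on a single ball $B(x,r+1)$: on such a ball $\Delta$ acts (up to the tail $\sum_{s>r}p_s E_s$, which is small when $r$ is large since $p_s\leq C_2\rho^{-s}$) like the finite hierarchical Laplacian whose spectrum includes $\la_r$ with the relevant eigenvector constant on the $n$ sub-balls of radius $r$; choosing the potential values on that ball close to $v$ (a positive-probability event, hence a.s.\ realized somewhere by Borel--Cantelli) produces a Weyl sequence for $H_\omega$ at $E$.

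\textbf{Parts (2) and (3): dense pure-point spectrum.} Here the route is the fractional-moment / Simon--Wolff criterion adapted to the hierarchical setting. The key structural fact is that the rank-one (indeed finite-rank) perturbation theory is especially clean for $\Delta$: the resolvent matrix elements $\scalp{\delta_x}{(H_\omega-z)^{-1}\delta_x}$ can be analyzed using the block/ball structure, and one shows a finiteness estimate
\[
\Exp\,\abs{\scalp{\delta_x}{(H_\omega-E-i0)^{-1}\delta_x}}^s<\infty
\]
for $0<s<1$ and Lebesgue-a.e.\ $E\in\Sigma$, under the hypothesis $\mathrm{d}<4$. The dimension constraint enters precisely through the decay rate $p_r\asymp\rho^{-r}$ versus the ball sizes $n^{r}$: summability of the relevant geometric series (the same series controlling whether the Green's function has finite fractional moments) requires the exponent $\mathrm{d}=2\log n/\log\rho$ to satisfy $\mathrm{d}<4$. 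By the Simon--Wolff theorem, finiteness of $\Exp|G_{xx}(E+i0)|^s$ at a.e.\ $E$ in a set of positive measure implies that the part of the spectral measure of $\delta_x$ in that set is pure point; running $x$ over all of $\mathbb{X}$ (a countable set) and using that $\{\delta_x\}$ is cyclic gives pure-point spectrum throughout $\Sigma$, while density of eigenvalues in $\Sigma$ follows from Part (1) together with the fact that eigenvalues are a.s.\ dense wherever the spectrum is (each interval $[\la_r+a,\la_r+b]$ carries a.c.-in-$E$ spectral mass for the relevant local measures). For Part (3), the Cauchy distribution makes the fractional-moment bound essentially explicit: with $\gamma$ of the form \eqref{Cauchy}, the conditional distribution of a single $\omega(x)$ given the rest, when fed through the rank-one resolvent formula, remains Cauchy, so $\Exp|G_{xx}(z)|^s$ can be computed in closed form and is finite for \emph{all} $\mathrm{d}<\infty$; this is the classical ``Cauchy trick'' (as in Simon's work and in \cite{M3}), which removes the dimensional restriction entirely.

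\textbf{Main obstacle.} The delicate point in reconstructing (2) is the derivation of the uniform-in-volume fractional-moment bound on the Green's function: one must iterate the ball-decomposition of $\Delta$, track how the diagonal resolvent element of a radius-$(r+1)$ ball relates to those of its $n$ radius-$r$ sub-balls, and show the resulting recursion contracts in the $s$-th-moment norm exactly when the geometric factor $n\cdot\rho^{-2s}$-type quantity is controlled — this is where the precise threshold $\mathrm{d}<4$ is forced and where the combinatorics of the hierarchical tree genuinely must be exploited; the ergodicity and Weyl-sequence arguments in (1), and the Cauchy specialization in (3), are comparatively routine.
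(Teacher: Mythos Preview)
You have correctly identified the situation: the paper gives no proof of Theorem~\ref{hierarchical_sp} whatsoever---it is explicitly introduced as a collection of known results ``stated without proof,'' with citations to \cite{K2} and \cite{M3}. There is therefore no proof in the paper to compare your proposal against; your opening remark already says exactly this, and the rest of your write-up is an informed sketch of how the cited references argue, which is the appropriate response here.
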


\subsection{Density of states.}
We denote by $C_0(\Real)$ the space of continuous functions $f:\Real\To\Comp$ vanishing at infinity, i.e. $\lim_{\abs{t}\To\infty}\abs{f(t)}=0$.
If $(\nu_k)_{k\geq 1}$ and $\nu$ are Borel probability measures on $\Real$, we say that $\nu_k$ converges to $\nu$ in the weak-* topology if for every $f\in C_0(\Real)$,
$$\lim_{k\To\infty}\int f(t)d\nu_k(t)=\int f(t)d\nu(t).$$
The finite volume approximations to $H_\omega$ are defined as follows. We fix $x_0\in \mathbb{X}$ and we consider the increasing sequence of closed balls 
$$B_k=B(x_0,k)\qquad k\geq 0.$$
Each $B_k$ has then size $\abs{B_k}=n^k$. We define $H_k^{\omega}$ to be the truncated operator 
$$H_k^{\omega}=\sum_{s=1}^{k}p_sE_s + V_\omega.$$
Note that the subspace
$$l^2(B_k)=\set{\psi\in l^2(\mathbb{X}):\psi(x)=0 \textrm{ for } x\notin B_k},$$
is invariant for $H_k^{\omega}$. The normalized eigenvalue counting measure $\mu_k^{\omega}$ is then given by \eqref{defn_eigenvaluecounting}.
The averaged spectral measure for $H_\omega$ is the unique Borel probability measure $\mu^{av}$ on $\Real$ defined by
\begin{equation}\label{defn_avgspectralmeasure}
\int f(t)d\mu^{av}(t)=\Exp\scalp{\delta_{x_0}}{f(H_\omega)\delta_{x_0}}, \qquad f\in C_0(\Real).
\end{equation}
By symmetry, $\int f(t)d\mu^{av}(t)=\Exp\scalp{\delta_{x}}{f(H_\omega)\delta_{x}}$ for all $x\in \mathbb{X}$. 
The content of the following theorem is that the averaged spectral measure $\mu^{av}$ is naturally interpreted as the density of states for $H_{\omega}$.
\begin{thm}\label{density_of_states} For $\Prob$-a.e. $\omega\in\Omega$, $\mu_k^{\omega}\To\mu^{av}$ in the weak-* topology as $k\To\infty$, i.e. there is a set $\widetilde{\Omega}\in\cF$ with $\Prob(\widetilde{\Omega})=1$ such that for all $\omega\in\widetilde{\Omega}$ and $f\in C_0(\Real)$ we have
$$\lim_{k\To\infty}\int f(t)d\mu_k^{\omega}(t)=\int f(t)d\mu^{av}(t).$$
\end{thm}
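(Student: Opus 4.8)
The plan is to prove the convergence $\mu_k^\omega \to \mu^{av}$ as a standard ``law of large numbers'' for spectral measures, exploiting the strong hierarchical independence structure of the model. The first observation is that, since $C_0(\Real)$ is separable in the supremum norm, it suffices to fix a countable dense set $\set{f_m}_{m\geq 1}\subset C_0(\Real)$ and show that for each $m$ there is a full-measure set $\widetilde\Omega_m$ on which $\int f_m\,d\mu_k^\omega\to\int f_m\,d\mu^{av}$; then $\widetilde\Omega=\bigcap_m\widetilde\Omega_m$ works by a routine $3\eps$-argument using $\sup_k\|\mu_k^\omega\|=1=\|\mu^{av}\|$. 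So everything reduces to a single bounded continuous $f$.

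Next I would rewrite the empirical integral in spectral-local form. Since $l^2(B_k)$ is $H_k^\omega$-invariant and $\abs{B_k}=n^k$,
$$\int f\,d\mu_k^\omega = n^{-k}\,\Tr\big(f(H_k^\omega\upharpoonright l^2(B_k))\big) = n^{-k}\sum_{x\in B_k}\scalp{\delta_x}{f(H_k^\omega)\delta_x}.$$
The key structural point is that $H_k^\omega = \sum_{s=1}^k p_s E_s + V_\omega$ and, by the ball decomposition, if $x,y$ lie in \emph{different} balls of radius $k$ then the operators $E_s$, $s\le k$, do not connect them; more precisely the matrix element $\scalp{\delta_x}{f(H_k^\omega)\delta_x}$ depends only on the restriction of $H_k^\omega$ to the ball $B(x,k)=B_k$, and within $B_k$ the potential values $(\omega(y))_{y\in B_k}$ are i.i.d. Rather than chopping $B_k$ into disjoint pieces for a fixed $k$, the cleanest route is to compare with the infinite-volume operator: one shows $\scalp{\delta_x}{f(H_k^\omega)\delta_x}\to\scalp{\delta_x}{f(H_\omega)\delta_x}$ as $k\to\infty$ in a suitable averaged sense, using that $H_k^\omega\to H_\omega$ strongly (the tail $\sum_{s>k}p_sE_s$ has norm $\sum_{s>k}p_s\to 0$, so $f(H_k^\omega)\to f(H_\omega)$ strongly for $f\in C_0$ by functional calculus), hence $\Exp\scalp{\delta_{x_0}}{f(H_k^\omega)\delta_{x_0}}\to\Exp\scalp{\delta_{x_0}}{f(H_\omega)\delta_{x_0}}=\int f\,d\mu^{av}$.

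It then remains to handle the fluctuations of $n^{-k}\sum_{x\in B_k}\big(\scalp{\delta_x}{f(H_k^\omega)\delta_x}-\Exp\scalp{\delta_x}{f(H_k^\omega)\delta_x}\big)$. Here I would split $B_k$ into the $n^{k-\ell}$ balls of some intermediate radius $\ell=\ell(k)$ with $1\le\ell\ll k$: the contribution of $\sum_{s\le\ell}p_sE_s+V_\omega$ restricted to distinct radius-$\ell$ balls gives \emph{independent} blocks, while replacing $H_k^\omega$ by its ``radius-$\ell$ truncation'' $H_\ell^\omega$ inside each diagonal entry costs at most $\sup|f|'\cdot\big(\sum_{s>\ell}p_s\big)$-type error, uniformly, which $\to 0$ as $\ell\to\infty$. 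On the independent blocks one applies Kolmogorov's strong law of large numbers along a subsequence of $k$'s, combined with a Borel--Cantelli / monotonicity interpolation to fill the gaps, to get $\Prob$-a.s. convergence of the block average to its mean; the boundedness of $f$ makes all summands bounded, so no moment conditions on $\gamma$ are needed. The main obstacle, and the only place requiring care, is the simultaneous choice of $\ell(k)\to\infty$ slowly enough that (i) the truncation error $H_k^\omega\rightsquigarrow H_\ell^\omega$ in the diagonal entries is negligible, (ii) the number of independent blocks $n^{k-\ell}\to\infty$ fast enough for the law of large numbers, and (iii) the a.s. statement survives the interpolation between the chosen $k$'s; balancing these three while keeping everything inside one fixed full-measure set (after the countable intersection over $f_m$) is the crux of the argument.
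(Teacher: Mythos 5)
Your high-level plan -- decompose $B_k$ into disjoint balls on which a truncated operator acts block-diagonally, apply the strong law of large numbers to the block averages, and control the truncation error via the tail $\sum_{s>\ell}p_s$ -- is exactly the paper's strategy. But you have made the argument materially harder than it needs to be, and the difficulty you flag at the end (``balancing these three ... is the crux'') is a self-inflicted wound: the paper avoids it entirely by choosing the truncation radius \emph{independently of $k$}. Concretely, fix $\eps>0$ and choose $r=r(\eps,z)$ so that $|\im z|^{-2}\sum_{s>r}p_s<\eps/2$. Then $\abs{B_k}^{-1}\sum_{x\in B_k}\scalp{\delta_x}{(H_r^\omega-z)^{-1}\delta_x}$ is, for every $k>r$, an honest average of $n^{k-r}$ i.i.d.\ bounded random variables drawn from one fixed infinite i.i.d.\ sequence (indexed by the radius-$r$ balls), so Kolmogorov's SLLN applies verbatim -- no triangular array, no Borel--Cantelli interpolation, no subsequence bookkeeping. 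The approximation error $H_k^\omega\rightsquigarrow H_r^\omega$ contributes at most $\eps/2$ uniformly in $k$ and $\omega$ by the resolvent bound, and $r\to\infty$ is taken only afterwards by sending $\eps\downarrow0$ and intersecting countably many full-measure sets. Separating the two limits ($k\to\infty$ for fixed $r$, then $r\to\infty$) is the whole point, and your coupled choice $\ell(k)\to\infty$ reintroduces precisely the complications the paper's ordering is designed to eliminate.

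A second, smaller issue: you estimate the truncation error by a quantity involving ``$\sup|f|'$'', which is not available for general $f\in C_0(\Real)$. The paper sidesteps this by first proving the a.s.\ convergence only for the resolvent functions $t\mapsto(t-z)^{-1}$, where one has the clean, $\omega$-uniform operator-norm bound $\norm{(H_r^\omega-z)^{-1}-(H_k^\omega-z)^{-1}}\le|\im z|^{-2}\sum_{s=r+1}^{k}p_s$, and then extending to all of $C_0(\Real)$ by a density argument over a countable set $G\subset\Comp\setminus\Real$. You should reduce to resolvents before, not after, invoking the block decomposition; your separability reduction to a countable dense family $\{f_m\}$ in $C_0$ is fine in principle but you then need a uniform modulus-of-continuity bound on the truncation error for each $f_m$, which the resolvent route gives you for free.
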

We start the proof of Theorem \ref{density_of_states} with resolvent bounds.
Since $H^{\omega}_{r}=H^{\omega}_{r-1}+p_rE_r$, the resolvent identity yields
$$(H^{\omega}_{r-1}-z)^{-1}-(H^{\omega}_{r}-z)^{-1}=p_r(H^{\omega}_{r-1}-z)^{-1}E_r(H^{\omega}_{r}-z)^{-1},$$
for $z\in\Comp\backslash\Real$.
Therefore:
\begin{equation}\label{resolvent1}
\norm{(H^{\omega}_{r-1}-z)^{-1}-(H^{\omega}_{r}-z)^{-1}}\leq \abs{\im z}^{-2}p_r,\qquad z\in\Comp\backslash\Real.
\end{equation}
Iterating \eqref{resolvent1} yields for $r<k$,
\begin{equation}\label{resolvent2}
\norm{(H^{\omega}_{r}-z)^{-1}-(H^{\omega}_{k}-z)^{-1}}\leq \abs{\im z}^{-2}\sum_{s=r+1}^{k}p_s,\qquad z\in\Comp\backslash\Real.
\end{equation}
and letting $k\To\infty$, 
\begin{equation}\label{resolvent3}
\norm{(H^{\omega}_{r}-z)^{-1}-(H_{\omega}-z)^{-1}}\leq \abs{\im z}^{-2}\sum_{s=r+1}^{\infty}p_s,\qquad z\in\Comp\backslash\Real.
\end{equation}
\begin{prop}\label{prop_density}For every $z\in\Comp\backslash\Real$ there is a set $\Omega_z\in\cF$, with $\Prob(\Omega_z)=1$
 and such that for all $\omega\in\Omega_z$,
 the difference
 $$D_{k,\omega}=\int(t-z)^{-1}d\mu_k^{\omega}(t)-\int(t-z)^{-1}d\mu^{av}(t),$$
 converges to $0$ as $k\To\infty$.
 \end{prop}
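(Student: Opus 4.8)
The plan is to express each of the two integrals defining $D_{k,\omega}$ as an average of diagonal resolvent matrix elements, and then to replace the resolvent of $H_\omega$ (respectively of $H_k^{\omega}$) by a \emph{local} one obtained by truncating the hierarchical Laplacian at a fixed level $r$. The point of this model is that such a truncation decouples $l^2(\mathbb{X})$ into independent blocks indexed by the balls of radius $r$, so that the local Green's function at a point $x$ depends only on $(\omega(y))_{y\in B(x,r)}$; a sum of such quantities over $B_k$ is then a sum of independent, uniformly bounded terms, to which the strong law of large numbers applies, and the resolvent estimates \eqref{resolvent2}--\eqref{resolvent3} will control the error committed in this localization.

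Concretely, since $l^2(B_k)$ is invariant for $H_k^{\omega}$ and $\set{\delta_x}_{x\in B_k}$ is an orthonormal basis of it,
$$\int(t-z)^{-1}d\mu_k^{\omega}(t)=n^{-k}\sum_{j=1}^{n^k}(e^{\omega,k}_j-z)^{-1}=n^{-k}\sum_{x\in B_k}G_k^{\omega}(x,z),$$
where $G_k^{\omega}(x,z):=\scalp{\delta_x}{(H_k^{\omega}-z)^{-1}\delta_x}$. Since $t\mapsto(t-z)^{-1}$ lies in $C_0(\Real)$, the definition \eqref{defn_avgspectralmeasure} together with the symmetry of the model gives $\int(t-z)^{-1}d\mu^{av}(t)=\Exp\scalp{\delta_x}{(H_\omega-z)^{-1}\delta_x}=:m(z)$, the same for every $x\in\mathbb{X}$. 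Hence $D_{k,\omega}=n^{-k}\sum_{x\in B_k}\of{G_k^{\omega}(x,z)-m(z)}$, and it is enough to prove $D_{k,\omega}\To0$ for $\Prob$-a.e.\ $\omega$.

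Fix $r\geq1$, write $B(x,r)$ for the radius-$r$ ball containing $x$, and set $G_r^{\omega}(x,z):=\scalp{\delta_x}{(H_r^{\omega}-z)^{-1}\delta_x}$. The key structural observation is that $l^2(B(x,r))$ is invariant for $H_r^{\omega}=\sum_{s=1}^{r}p_sE_s+V_\omega$: each $E_s$ with $s\leq r$ averages over balls of radius $s$, any such ball being either contained in $B(x,r)$ or disjoint from it, while $V_\omega$ is diagonal. Consequently $H_r^{\omega}\upharpoonright l^2(B(x,r))$, and hence $G_r^{\omega}(x,z)$, is a deterministic function of $(\omega(y))_{y\in B(x,r)}$ alone, with $\abs{G_r^{\omega}(x,z)}\leq\abs{\im z}^{-1}$. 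Put $\eta_r:=\abs{\im z}^{-2}\sum_{s>r}p_s$, so that $\eta_r\To0$ as $r\To\infty$ because $\sum_s p_s=1$. For $r<k$ I would split $D_{k,\omega}=A_{k,r}+B_{k,r}+C_{k,r}$, with
\begin{equation*}
\begin{split}
A_{k,r}&=n^{-k}\sum_{x\in B_k}\of{G_k^{\omega}(x,z)-G_r^{\omega}(x,z)},\qquad B_{k,r}=n^{-k}\sum_{x\in B_k}\of{G_r^{\omega}(x,z)-\Exp G_r^{\omega}(x,z)},\\
C_{k,r}&=n^{-k}\sum_{x\in B_k}\of{\Exp G_r^{\omega}(x,z)-m(z)}.
\end{split}
\end{equation*}
By \eqref{resolvent2}, $\abs{G_k^{\omega}(x,z)-G_r^{\omega}(x,z)}\leq\abs{\im z}^{-2}\sum_{s=r+1}^{k}p_s\leq\eta_r$, so $\abs{A_{k,r}}\leq\eta_r$; and by \eqref{resolvent3} together with $\abs{\Exp(\cdot)}\leq\Exp\abs{\cdot}$, $\abs{\Exp G_r^{\omega}(x,z)-m(z)}\leq\eta_r$, so $\abs{C_{k,r}}\leq\eta_r$.

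For the remaining term $B_{k,r}$, observe that when $r<k$ the ball $B_k$ is a disjoint union of $n^{k-r}$ balls of radius $r$, which may be enumerated as $C_1,C_2,\dots$ so that $B_k=C_1\cup\cdots\cup C_{n^{k-r}}$ for every $k\geq r$; each $C_i$ has $n^r$ points. Writing $S_i:=\sum_{x\in C_i}\of{G_r^{\omega}(x,z)-\Exp G_r^{\omega}(x,z)}$ we get $B_{k,r}=n^{-r}(n^{k-r})^{-1}\sum_{i=1}^{n^{k-r}}S_i$, where each $S_i$ has mean zero, satisfies $\abs{S_i}\leq 2n^r\abs{\im z}^{-1}$, and depends only on $(\omega(y))_{y\in C_i}$; since the $C_i$ are pairwise disjoint and the coordinates of $\omega$ are independent, $S_1,S_2,\dots$ are independent. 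By Kolmogorov's strong law of large numbers (the $S_i$ being independent, mean zero and uniformly bounded), $M^{-1}\sum_{i=1}^{M}S_i\To0$ $\Prob$-a.s.\ as $M\To\infty$, so $B_{k,r}\To0$ as $k\To\infty$ on an event $\Omega_{r,z}$ with $\Prob(\Omega_{r,z})=1$. Setting $\Omega_z:=\bigcap_{r\geq1}\Omega_{r,z}$ we have $\Prob(\Omega_z)=1$ and, for every $\omega\in\Omega_z$, $\abs{D_{k,\omega}}\leq 2\eta_r+\abs{B_{k,r}}$ for each $r<k$, whence $\limsup_{k\To\infty}\abs{D_{k,\omega}}\leq2\eta_r$; letting $r\To\infty$ gives $D_{k,\omega}\To0$. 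I expect the only step needing genuine care to be the decoupling invoked above — verifying that the level-$r$ truncation of $\Delta$ really leaves each radius-$r$ ball invariant, so that $G_r^{\omega}(x,z)$ is a local function of the potential; once that is in place, the resolvent bounds and the law of large numbers finish the argument routinely.
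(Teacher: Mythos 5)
Your proposal is correct and follows essentially the same route as the paper: truncate the hierarchical Laplacian at a fixed level $r$ so that radius-$r$ balls become invariant blocks, control the truncation error via the resolvent bounds \eqref{resolvent2}--\eqref{resolvent3}, apply Kolmogorov's strong law to the resulting independent block sums, and take a countable intersection of full-measure events. The only cosmetic difference is that you split the error into three pieces $A_{k,r}+B_{k,r}+C_{k,r}$ and center the block sums, whereas the paper splits into two pieces and identifies the a.s.\ limit of the uncentered block average before comparing it to $m(z)$; and you intersect over $r\in\Nat$ rather than over $\eps=1/m$. Both are immaterial variations of the same argument.
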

 \begin{proof}
 Let $\eps>0$ be given. We take $r=r(\eps,z)$ big enough so that
 \begin{equation}\label{decouple}
\abs{\im z}^{-2}\sum_{s=r+1}^{\infty}p_s<\eps/2.
 \end{equation} Then for $r<k$,
 \begin{equation}
 \begin{split}
D_{k,\omega}&=\abs{B_k}^{-1}\sum_{x\in B_k}\scalp{\delta_x}{(H_k^{\omega}-z)^{-1}\delta_x}- \Exp\scalp{\delta_{x_0}}{(H_{\omega}-z)^{-1}\delta_{x_0}}\\
&=\set{\abs{B_k}^{-1}\sum_{x\in B_k}\scalp{\delta_x}{\of{(H_k^{\omega}-z)^{-1}-(H_r^{\omega}-z)^{-1}}\delta_x}} \\&+
\set{\abs{B_k}^{-1}\sum_{x\in B_k}\scalp{\delta_x}{(H_r^{\omega}-z)^{-1}\delta_x}- \Exp\scalp{\delta_{x_0}}{(H_{\omega}-z)^{-1}\delta_{x_0}}}\\
&= I_{k,\omega} +II_{k,\omega}
\end{split}
\end{equation}
The bounds \eqref{resolvent2} and \eqref{decouple} yield $\abs{I_{k,\omega}}<\eps/2$. We proceed with estimating $\abs{II_{k,\omega}}$. Note that
$B_k$ is a disjoint union of $n^{k-r}$ balls of radius $r$,
$$B_k=\bigcup_{j=1}^{n^{k-r}}B_{k,j},$$
and therefore
$$l^2(B_k)=\bigoplus_{j=1}^{n^{k-r}}l^2(B_{k,j}).$$
Since each subspace $l^2(B_{k,j})$ is invariant for $H_r^{\omega}$, we can write
$$\abs{B_k}^{-1}\sum_{x\in B_k}\scalp{\delta_x}{(H_r^{\omega}-z)^{-1}\delta_x}=\frac{1}{n^{k-r}} 
\sum_{j=1}^{n^{k-r}} n^{-r}\sum_{x\in B_{k,j}}\scalp{\delta_x}{(H_r^{\omega}-z)^{-1}\delta_x},$$
and recognize that the right hand side is an average of $n^{k-r}$ identically distributed random variables. Hence, Kolmogorov's strong law of large numbers yields that there is a set $\Omega_{z,\eps}\in\cF$ with $\Prob(\Omega_{z,\eps})=1$ and such that for all $\omega\in\Omega_{z,\eps}$,
\begin{equation}\label{lln}
\lim_{k\To\infty}\abs{B_k}^{-1}\sum_{x\in B_k}\scalp{\delta_x}{(H_r^{\omega}-z)^{-1}\delta_x}=n^{-r}\sum_{x\in \widetilde{B}}\Exp \scalp{\delta_x}{(H_r^{\omega}-z)^{-1}\delta_x},
\end{equation}
where $\widetilde{B}$ is some fixed ball of radius $r$. The bounds \eqref{resolvent2} and \eqref{decouple} yield 
$$\abs{\scalp{\delta_x}{(H_r^{\omega}-z)^{-1}\delta_x}- \scalp{\delta_x}{(H_{\omega}-z)^{-1}\delta_x}}<\eps/2,$$
which combined with \eqref{lln} yields
$$\limsup_{k\To\infty}\abs{II_{k,\omega}}<\eps/2.$$ Hence for $\omega\in\Omega_{z,\eps}$,
$\limsup_{k\To\infty}\abs{D_{k,\omega}}<\eps,$
and the statement follows after taking 
$\Omega_z=\bigcap_{m=1}^{\infty}\Omega_{z,1/m}.$
\end{proof}
Theorem \ref{density_of_states} is a consequence of Proposition \ref{prop_density} and a density argument. Let $G$ be a countable dense set in $\Comp\backslash\Real$. Since any function $f\in C_0(\Real)$ can be uniformly approximated by finite linear combinations of the functions $t\To (t-z)^{-1}$, with $z$ ranging through $G$, Theorem \ref{density_of_states} follows after taking 
$\widetilde{\Omega}=\bigcap_{z\in G}\Omega_z.$
\newline \textit{Remarks on Theorem \ref{density_of_states}:} There is no restriction on the spectral dimension $\rm{d}$. Also, the theorem and the above proof remain valid without the assumption that the random variables $\omega(x)$ have a density $\gamma$.

\subsection{Fine eigenvalue statistics.}
For our study of fine eigenvalue statistics, we need the following two well-known general estimates for random discrete Schr\"{o}dinger operators. For both estimates, the density $\gamma$ plays a fundamental role. 
\begin{lem}[Wegner Estimate \cite{W}]\label{general_wegner} Let $M_0$ be any self-adjoint operator on $l^2(\mathbb{X})$ and let
$$M_\omega=M_0 + V_\omega.$$
Then for every bounded Borel measurable function $h:\Real\To [0,\infty)$ and $x\in \mathbb{X}$,
\begin{equation}\label{general_wegner_estimate}
\Exp\scalp{\delta_x}{h(M_\omega)\delta_x}\leq \norm{\gamma}_\infty\int h(t)dt.
\end{equation}
Hence, if $\nu_\omega$ is the spectral measure for $\delta_x$ and $M_\omega$ and $\nu^{av}=\Exp\nu_\omega$ is the corresponding averaged measure, then $\nu^{av}$ is absolutely continuous with respect to Lebesgue measure,
$$d\nu^{av}(t)=\upsilon(t)dt,$$
and
$$\norm{\upsilon}_{\infty}\leq\norm{\gamma}_{\infty}.$$ 
\end{lem}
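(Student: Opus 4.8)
The plan is to prove the Wegner estimate by a standard spectral-averaging argument, exploiting that only the single coordinate $\omega(x)$ needs to be integrated out. Fix $x\in\mathbb{X}$ and write $\omega=(\omega(x),\omega^{\perp})$, where $\omega^{\perp}=(\omega(y))_{y\neq x}$ collects the remaining coordinates; by independence, $\Prob$ factorizes as $\gamma(s)\,ds$ in the $\omega(x)$-variable times a product measure $\Prob^{\perp}$ in $\omega^{\perp}$. For fixed $\omega^{\perp}$, the operator depends on $\omega(x)$ through the rank-one perturbation $M_\omega=M_{\omega^{\perp}}+\omega(x)\,|\delta_x\rangle\langle\delta_x|$, where $M_{\omega^{\perp}}=M_0+\sum_{y\neq x}\omega(y)\,|\delta_y\rangle\langle\delta_y|$ is independent of $\omega(x)$. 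So the quantity to bound is
\begin{equation*}
\Exp\scalp{\delta_x}{h(M_\omega)\delta_x}=\int_{\Omega^{\perp}}\of{\int_{\Real}\scalp{\delta_x}{h(M_{\omega^{\perp}}+s\,|\delta_x\rangle\langle\delta_x|)\delta_x}\gamma(s)\,ds}d\Prob^{\perp}(\omega^{\perp}).
\end{equation*}
It therefore suffices to show the inner integral is at most $\norm{\gamma}_\infty\int h(t)\,dt$ for every fixed self-adjoint $N$ (playing the role of $M_{\omega^{\perp}}$), which reduces the whole statement to a one-parameter spectral-averaging lemma.

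The key step is the spectral-averaging identity: for a self-adjoint operator $N$ and the rank-one family $N_s=N+s\,|\delta_x\rangle\langle\delta_x|$, the measure $ds$-averaged spectral measure of $\delta_x$ has density at most $1$ with respect to Lebesgue measure, i.e. $\int_{\Real}\scalp{\delta_x}{g(N_s)\delta_x}\,ds\leq\int g(t)\,dt$ for every bounded nonnegative Borel $g$. I would establish this first for $g(t)=\pi^{-1}\im(t-z)^{-1}$, $\im z>0$, the Poisson kernel: setting $F(s,z)=\scalp{\delta_x}{(N_s-z)^{-1}\delta_x}$, the rank-one resolvent (Aronszajn--Krein) formula gives $F(s,z)=F(0,z)/(1+sF(0,z))$, so $F$ is, for each fixed $z$, a Herglotz function of $s$ of the special form $(\text{const}+s)^{-1}$ up to the transformation; a direct computation of $\int_{\Real}\im F(s,z)\,ds$ — or the observation that $\partial_s F = -F^2$ combined with $\im(1/w)$ bookkeeping — yields exactly $\pi$. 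Passing from Poisson kernels to general bounded nonnegative $g$ is then a routine approximation argument, since finite linear combinations of Poisson kernels are weak-* dense and both sides are monotone and continuous under the relevant limits. Finally, inserting $\gamma(s)\,ds$ in place of $ds$ only costs a factor $\norm{\gamma}_\infty$, since $\int g(N_s)\gamma(s)\,ds\leq\norm{\gamma}_\infty\int g(N_s)\,ds$.

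Once the pointwise-in-$\omega^{\perp}$ bound $\int\scalp{\delta_x}{h(M_{\omega^{\perp}}+s|\delta_x\rangle\langle\delta_x|)\delta_x}\gamma(s)\,ds\leq\norm{\gamma}_\infty\int h(t)\,dt$ is in hand, integrating over $\omega^{\perp}$ against $\Prob^{\perp}$ (a probability measure) preserves the bound and gives \eqref{general_wegner_estimate}. For the second assertion, apply the estimate with $h=1_A$ for a bounded Borel set $A$: this shows $\nu^{av}(A)=\Exp\nu_\omega(A)=\Exp\scalp{\delta_x}{1_A(M_\omega)\delta_x}\leq\norm{\gamma}_\infty\Leb(A)$, so $\nu^{av}\ll\Leb$ with Radon--Nikodym derivative $\upsilon$ satisfying $\norm{\upsilon}_\infty\leq\norm{\gamma}_\infty$.

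The main obstacle is the spectral-averaging lemma itself, specifically justifying the rank-one resolvent computation and the integral $\int_{\Real}\im F(s,z)\,ds=\pi$ when $N$ is an arbitrary (possibly unbounded, here in fact bounded since only $V_{\omega^{\perp}}$ contributes off-diagonal) self-adjoint operator, and then bootstrapping from the Poisson kernel to arbitrary bounded Borel $h$ with the correct constant and no hidden finiteness issues when $h$ has unbounded support. Everything else — the factorization of $\Prob$, Fubini, the rank-one structure of the perturbation in the single variable $\omega(x)$, and the passage to the Radon--Nikodym statement — is routine.
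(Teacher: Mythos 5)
The paper states this Wegner estimate without proof, citing it as a standard result from \cite{W}; there is therefore no in-paper argument to compare against. Your proof via rank-one spectral averaging is the standard route and is correct as outlined: conditioning on $\omega^\perp$ isolates $\omega(x)$ as a rank-one coupling constant, the Aronszajn--Krein identity gives $1/F(s,z)=1/F(0,z)+s$ so that $\int_\Real \im F(s,z)\,ds=\pi$ for every $\im z>0$, and hence the $ds$-averaged spectral measure of $\delta_x$ is exactly Lebesgue measure; inserting the density $\gamma$ costs $\norm{\gamma}_\infty$, and the remaining steps (Fubini, passage from Poisson kernels to indicators via Stieltjes inversion, monotone convergence for $h$ of unbounded support, and the Radon--Nikodym reformulation) are routine exactly as you flag them.
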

\begin{lem}[Minami's Estimate \cite{Mi,GV,BHS}]\label{general_minami} Let $M_0$ be any self-adjoint operator on $l^2(\mathbb{X})$ and let
$$M_\omega=M_0 + V_\omega.$$
Then for every $x,y\in \mathbb{X}$ and $\im z >0$
\begin{equation}\label{general_minami_estimate}
\Exp\det \left( \begin{array}{cc}
\scalp{\delta_x}{\im(M_\omega -z)^{-1}\delta_x}&\scalp{\delta_x}{\im(M_\omega -z)^{-1}\delta_y}  \\
\scalp{\delta_y}{\im(M_\omega -z)^{-1}\delta_x} & \scalp{\delta_y}{\im(M_\omega -z)^{-1}\delta_y} \\
\end{array} \right) \leq \pi ^2\norm{\gamma}_{\infty}^2
\end{equation}
\end{lem}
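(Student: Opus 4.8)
The plan is to run Minami's original argument: spectral averaging over the two variables $\omega(x)$ and $\omega(y)$ reduces the estimate to a purely deterministic two‑dimensional integral which can be evaluated in closed form.

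First, if $x=y$ the matrix in \eqref{general_minami_estimate} has all four entries equal, so its determinant vanishes and there is nothing to prove; assume henceforth $x\neq y$. Let $P$ be the orthogonal projection of $l^2(\mathbb{X})$ onto $\mathrm{span}\set{\delta_x,\delta_y}$ and, for $\im z>0$, put $R_\omega=(M_\omega-z)^{-1}$ and $\mathcal{G}_\omega(z)=PR_\omega P$, viewed as a $2\times 2$ matrix in the basis $\set{\delta_x,\delta_y}$. Since $M_\omega$ is real symmetric in the basis $\set{\delta_u}$, $R_\omega$ is complex symmetric, hence so is $\mathcal{G}_\omega(z)$; and since $P$ is a real projection, $\im\mathcal{G}_\omega(z)=P\,\im(R_\omega)\,P$ is exactly the matrix whose determinant appears in \eqref{general_minami_estimate}. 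Because $\im(R_\omega)=\im(z)\,R_\omega^{*}R_\omega\geq 0$, the entries $\scalp{\delta_i}{\im(R_\omega)\delta_j}$ with $i,j\in\set{x,y}$ form the Gram matrix of $\im(R_\omega)^{1/2}\delta_x$ and $\im(R_\omega)^{1/2}\delta_y$, so $\det\im\mathcal{G}_\omega(z)\geq 0$ by Cauchy--Schwarz. Freezing the values $\set{\omega(u):u\neq x,y}$ and writing $(a,b)=(\omega(x),\omega(y))$, it therefore suffices, by Tonelli's theorem and $\gamma\leq\norm{\gamma}_\infty$, to prove the deterministic bound
\begin{equation}\label{eq_minami_det}
\int_{\Real^{2}}\det\im\mathcal{G}_\omega(z)\;da\,db\;\leq\;\pi^{2}.
\end{equation}

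To prove \eqref{eq_minami_det}, let $M^{(xy)}=M_\omega-aP_x-bP_y$ be the operator obtained by setting $\omega(x)=\omega(y)=0$, and (assuming for the moment that it is invertible) let $N_0=\of{P(M^{(xy)}-z)^{-1}P}^{-1}$. Applying the resolvent identity to the rank‑$\leq 2$ perturbation $aP_x+bP_y$ and compressing by $P$ gives Krein's formula
$$\mathcal{G}_\omega(z)^{-1}=N_0+\mathrm{diag}(a,b),$$
valid off a Lebesgue‑null subset of $\Real^2$. Here $N_0$ is complex symmetric and, since $\im\of{P(M^{(xy)}-z)^{-1}P}\geq 0$, one has $\im N_0\leq 0$; write $\im(N_0)_{11}=-\al$, $\im(N_0)_{22}=-\beta$, $\im(N_0)_{12}=-\gamma$, $\re(N_0)_{12}=Q$, so $\al,\beta\geq 0$ and $\al\beta\geq\gamma^{2}$. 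A short computation shows that for every complex symmetric $2\times 2$ matrix $N$ one has $\det\im(N^{-1})=\det\im(N)\big/\abs{\det N}^{2}$; applying this with $N=\mathcal{G}_\omega(z)^{-1}$, and noting that $\im\of{\mathcal{G}_\omega(z)^{-1}}=\im(N_0)$ is independent of $(a,b)$, we obtain
$$\det\im\mathcal{G}_\omega(z)=\frac{\al\beta-\gamma^{2}}{\abs{\det\of{N_0+\mathrm{diag}(a,b)}}^{2}}.$$
Writing $\det\of{N_0+\mathrm{diag}(a,b)}=\Delta_1(a,b)-i\Delta_2(a,b)$, both $\Delta_1$ and $\Delta_2$ are affine in $b$ for fixed $a$ (and affine in $a$ for fixed $b$), so $\abs{\det\of{N_0+\mathrm{diag}(a,b)}}^{2}$ is, for fixed $a$, a quadratic polynomial in $b$ with positive leading coefficient. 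Two applications of the elementary integral $\int_{\Real}(At^{2}+Bt+C)^{-1}dt=2\pi\,(4AC-B^{2})^{-1/2}$ then give
$$\int_{\Real^{2}}\frac{da\,db}{\abs{\det\of{N_0+\mathrm{diag}(a,b)}}^{2}}=\frac{\pi^{2}}{\sqrt{(\al\beta-\gamma^{2})(\al\beta+Q^{2})}},$$
whence
$$\int_{\Real^{2}}\det\im\mathcal{G}_\omega(z)\;da\,db=\pi^{2}\sqrt{\frac{\al\beta-\gamma^{2}}{\al\beta+Q^{2}}}\;\leq\;\pi^{2},$$
since $\gamma^{2}\geq 0$ and $Q^{2}\geq 0$. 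This proves \eqref{eq_minami_det}, and taking the expectation over $\set{\omega(u):u\neq x,y}$ shows that the left side of \eqref{general_minami_estimate} is at most $\pi^{2}\norm{\gamma}_\infty^{2}$. The degenerate cases — $\al\beta=\gamma^{2}$, where the integrand above vanishes identically, or $P(M^{(xy)}-z)^{-1}P$ non‑invertible — are handled by a routine limiting argument: carry out the computation with $N_0$ replaced by $N_0-i\eps I$ (whose imaginary part is strictly negative definite, so all discriminants are positive and $\al\beta-\gamma^{2}$ becomes $(\al+\eps)(\beta+\eps)-\gamma^{2}>0$) and let $\eps\downarrow 0$, using Fatou's lemma.

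The only genuinely computational ingredients are the algebraic identity $\det\im(N^{-1})=\det\im(N)/\abs{\det N}^{2}$ and the double integral, and I expect the latter — tracking the constants so that the answer is exactly $\pi^{2}\sqrt{(\al\beta-\gamma^{2})/(\al\beta+Q^{2})}$ rather than something larger — to be the main point of effort. Conceptually that formula is where one sees that the real part $Q$ of the off‑diagonal Green's function and the off‑diagonal part $\gamma$ of its imaginary part can only pull the integral below the ``decoupled'' value $\pi^{2}$; everything else (Krein's formula, the Tonelli reduction, positivity of $\det\im\mathcal{G}_\omega(z)$) is standard.
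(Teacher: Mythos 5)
The paper does not prove Lemma~\ref{general_minami}: it is stated with citations to \cite{Mi,GV,BHS} and used as a black box, so there is no in-paper proof to compare against. Your proposal is an essentially correct reconstruction of Minami's original argument (spectral averaging over $\omega(x),\omega(y)$, Krein's formula $\mathcal{G}_\omega(z)^{-1}=N_0+\mathrm{diag}(a,b)$, the identity $\det\im(N^{-1})=\det\im(N)/\abs{\det N}^2$, and the closed-form evaluation of the resulting two-dimensional integral), and the final formula $\pi^2\sqrt{(\al\beta-\gamma^2)/(\al\beta+Q^2)}$ is right, as is the Tonelli reduction using positivity of the determinant via the Gram-matrix/Cauchy--Schwarz observation.

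One gap relative to the lemma as stated: you invoke ``$M_\omega$ is real symmetric in the basis $\set{\delta_u}$'' to conclude $R_\omega$, $\mathcal{G}_\omega(z)$, and hence $N_0$ are complex symmetric, and your parametrization $(N_0)_{12}=(N_0)_{21}=Q-i\gamma$ depends on this. But the lemma only assumes $M_0$ self-adjoint, not that its matrix elements are real, so in general $N_0$ need not be complex symmetric and the diagonalization-in-$b$ step must be redone with $(N_0)_{12}\neq(N_0)_{21}$. This matters for the generality claimed, though not for the two applications in this paper (hierarchical Laplacian and the discrete Laplacian on $\zd$ are both real). The general self-adjoint case is precisely the point of \cite{GV} and \cite{BHS}, which organize the argument around a trace/determinant identity rather than a symmetric parametrization; if you want your proof to cover the lemma as written, either add the hypothesis that $M_0$ is real, or redo the double integral with a general off-diagonal pair $(N_0)_{12}$, $(N_0)_{21}$ and verify that the resulting discriminant $4\al\beta(\al\beta+\re R)-(\im R)^2$ with $R=(N_0)_{12}(N_0)_{21}$ still dominates $4(\det\im N_0)^2$ under the constraint $\im N_0\le 0$. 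Also note, as a small simplification, that $\im\bigl(P(M^{(xy)}-z)^{-1}P\bigr)$ is in fact strictly positive definite for $\im z>0$, so $N_0$ always exists and $\al\beta>\gamma^2$ strictly; the $\eps$-regularization at the end, while harmless, is not needed.
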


Wegner estimate yields that $\mu^{av}$ is absolutely continuous with respect to Lebesgue measure,
$$d\mu^{av}(t)=\eta(t)dt,$$
and
$$\norm{\eta}_{\infty}\leq\norm{\gamma}_{\infty}.$$ 
If $e\in \sum$ and $\eps>0$ are given, then in view of Theorem \ref{density_of_states} we expect the number of eigenvalues of $H_k^{\omega}\upharpoonright l^2(B_k)$ in the interval $(e-\eps,e+\eps)$,
$$\#\set{i:e^{\omega,k}_i\in (e-\eps,e+\eps)},$$
to have typical size of order $\abs{B_k}\mu^{av}(e-\eps,e+\eps)$ for large $k$. The precise statistical behavior of the eigenvalues $e^{\omega,k}_j$ near $e$ is captured by the rescaled measure $\xi_k^{\omega,e}$ given by \eqref{defn_rescaledmeasure}.
We make the following regularity assumption on $e$:
for $\im z >0$, 
\begin{equation}\label{FatouPoint}
\lim_{\eps\downarrow 0}\int \im (t-e-\eps z)^{-1} \eta(t)dt=\pi\eta(e).
\end{equation}
For example, if $\eta$ is continuous at $e$, then \eqref{FatouPoint} holds. However, it is in general a difficult problem to establish continuity of $\eta$ for random discrete Schr\"{o}dinger operators. In the case of Cauchy random potential \eqref{Cauchy},   $\eta$ is known to be analytic \cite{L}. If the Fourier transform of $\gamma(t)$ decays exponentially, then it is possible \cite{CFS} to prove analyticity of $\eta$ after increasing the disorder, i.e. replacing $V_\omega$ with $\sigma V_\omega$ for a sufficiently large $\sigma$. When continuity of $\eta$ is not available, one appeals to a classical theorem in harmonic analysis (see for example \cite{Ko}), due to Fatou, guaranteeing that \eqref{FatouPoint} holds for Lebesgue almost all $e\in\Real$. We now state our main result.
\begin{thm}\label{Poisson_stat} Assume that the model has spectral dimension $\rm{d}<1$. Assume that $\eta(e)>0$ and that $e$ verifies the regularity condition \eqref{FatouPoint}. Then $\xi_k^{\omega,e}$ converges to a Poisson point process on $\Real$ with intensity $\eta(e)\Leb$.
\end{thm}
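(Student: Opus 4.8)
The plan is to realize $\xi_k^{\omega,e}$ as having the same limit as a sum of many independent infinitesimal point processes, to which Theorem~\ref{GrigelionisPoisson} applies. The decoupling comes from replacing $H_k^\omega$ by the truncation $H_r^\omega=\sum_{s=1}^{r}p_sE_s+V_\omega$ at a level $r=r(k)$ that grows with $k$: since $H_r^\omega$ leaves $l^2$ of every ball of radius $r$ invariant and $B_k$ is the disjoint union of the $n^{k-r}$ balls $B_{k,1},\dots,B_{k,n^{k-r}}$ of radius $r$, the operator $H_r^\omega\upharpoonright l^2(B_k)$ is the orthogonal direct sum of the blocks $H_r^\omega\upharpoonright l^2(B_{k,j})$, and these are independent and identically distributed by the hierarchical symmetry and the i.i.d.\ structure of $V_\omega$. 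Let $\xi_{k,j}^\omega=\sum_i\delta(n^k(e_i^{(j)}-e))$ be the rescaled eigenvalue process of the $j$-th block and $\widetilde\xi_k^\omega=\sum_{j=1}^{n^{k-r}}\xi_{k,j}^\omega$, which is exactly the rescaled eigenvalue process of $H_r^\omega\upharpoonright l^2(B_k)$. I will need $r(k)$ so that $k-r(k)\To\infty$ (so the blocks are numerous and infinitesimal) and, since $n^{2}=\rho^{\mathrm{d}}$, so that $\rho^{\mathrm{d}k-r(k)}=n^{2k}\rho^{-r(k)}\To0$ (so that $H_{r(k)}^\omega$ stays close to $H_\omega$ on the scale $n^{-k}$); both hold precisely when $\mathrm{d}<1$, e.g.\ with $r(k)=\lfloor\tfrac{1+\mathrm{d}}{2}k\rfloor$.

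First I would verify hypotheses (H0), (H1) and (H3) of Theorem~\ref{GrigelionisPoisson}. Using invariance of $l^2(B_{k,j})$ under $H_r^\omega$ and the hierarchical symmetry, $\Exp\,\Tr\,1_{e+n^{-k}A}\of{H_r^\omega\upharpoonright l^2(B_{k,j})}=n^{r}\nu_r^{av}(e+n^{-k}A)$, where $\nu_r^{av}=\eta_r(t)\,dt$ is the averaged spectral measure of $H_r^\omega$; hence $\sum_j\Exp\xi_{k,j}^\omega(A)=\int_A\eta_r(e+n^{-k}u)\,du$. The Wegner estimate (Lemma~\ref{general_wegner}) gives $\norm{\eta_r}_\infty\leq\norm{\gamma}_\infty$ for all $r$, which yields (H0) with $C=\norm{\gamma}_\infty$ and, via $\Prob\set{\xi_{k,j}^\omega(A)\geq1}\leq\Exp\xi_{k,j}^\omega(A)\leq\norm{\gamma}_\infty\,n^{-(k-r)}\Leb(A)$, also (H1). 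For (H3), set $w=e+n^{-k}z$ and $G_j=\of{H_r^\omega\upharpoonright l^2(B_{k,j})-w}^{-1}$; then $\cI\of{\xi_{k,j}^\omega,\im(\cdot-z)^{-1}}=n^{-2k}\of{(\Tr\,\im G_j)^2-\Tr\,(\im G_j)^2}$, and the identity $(\Tr M)^2-\Tr M^2=\sum_{x\neq y}\of{M_{xx}M_{yy}-M_{xy}M_{yx}}$ exhibits this as $n^{-2k}$ times a sum, over $x\neq y$ in $B_{k,j}$, of $2\times2$ principal minors of the positive semidefinite matrix $\im G_j$; by invariance these are the minors appearing in Lemma~\ref{general_minami}, so Minami's estimate bounds the expectation of each by $\pi^2\norm{\gamma}_\infty^2$, giving $\sum_j\Exp\cI\of{\xi_{k,j}^\omega,\im(\cdot-z)^{-1}}\leq\pi^2\norm{\gamma}_\infty^2\,n^{-(k-r)}\To0$.

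The delicate hypothesis is (H2), and it is where $\mathrm{d}<1$ is really used. By the computation above, $\sum_j\Exp\int_\Real\im(t-z)^{-1}\,d\xi_{k,j}^\omega(t)=\int_\Real\im(t-e-n^{-k}z)^{-1}\eta_r(t)\,dt$. I would write $\eta_r=\eta+(\eta_r-\eta)$: the $\eta$-part tends to $\pi\eta(e)$ by the regularity hypothesis \eqref{FatouPoint} applied along $\eps=n^{-k}\downarrow0$, while the $(\eta_r-\eta)$-part equals $\im\of{\Exp\scalp{\delta_{x_0}}{(H_r^\omega-w)^{-1}\delta_{x_0}}-\Exp\scalp{\delta_{x_0}}{(H_\omega-w)^{-1}\delta_{x_0}}}$ with $w=e+n^{-k}z$, which by \eqref{resolvent3} is at most $\abs{\im w}^{-2}\sum_{s>r}p_s\leq(\im z)^{-2}\,n^{2k}\cdot C\rho^{-r}$ in absolute value. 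For $r=r(k)$ as above this tends to $0$, so (H2) holds with $\la=\eta(e)$. Theorem~\ref{GrigelionisPoisson} then shows that $\widetilde\xi_k^\omega$ converges to a Poisson point process on $\Real$ with intensity $\eta(e)\Leb$.

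Finally I would transfer this to $\xi_k^{\omega,e}$. By the min--max principle the ordered eigenvalues of $H_k^\omega\upharpoonright l^2(B_k)$ and of $H_r^\omega\upharpoonright l^2(B_k)$ differ by at most $\norm{\sum_{s=r+1}^{k}p_sE_s}\leq\sum_{s>r}p_s\leq C\rho^{-r}$, so the matched atoms of $\xi_k^{\omega,e}$ and $\widetilde\xi_k^\omega$ differ by at most $\delta_k:=C\,n^k\rho^{-r(k)}\To0$. The Wegner estimate applied to $H_k^\omega$ gives $\Exp\,\xi_k^{\omega,e}(A)\leq\norm{\gamma}_\infty\Leb(A)$, and likewise for $\widetilde\xi_k^\omega$, so the mass either process places within distance $\delta_k$ of an endpoint of a bounded interval tends to $0$; hence $\Exp\,\abs{\xi_k^{\omega,e}(A)-\widetilde\xi_k^\omega(A)}\To0$ for every bounded interval $A$, and a routine approximation (legitimate since all the intensity measures in sight are absolutely continuous with respect to $\Leb$, with densities $\leq\norm{\gamma}_\infty$) extends this to arbitrary disjoint bounded Borel sets. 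Comparing characteristic functions via \eqref{DefConvergenceToPoissonFourier} then gives that $\xi_k^{\omega,e}$ has the same limit as $\widetilde\xi_k^\omega$, namely a Poisson point process on $\Real$ with intensity $\eta(e)\Leb$. The one genuine obstacle is the competition in the choice of $r(k)$ — fast enough growth for (H2), slow enough for (H1) and (H3) — which is exactly what $\mathrm{d}<1$ accommodates; the rest is bookkeeping with the Wegner and Minami inputs.
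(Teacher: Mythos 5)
Your proof is correct and follows essentially the same architecture as the paper: truncate at level $r_k$ with $r_k/k\to c\in(\mathrm{d},1)$ so that $H_{r_k}^\omega\upharpoonright l^2(B_k)$ decouples into $n^{k-r_k}$ i.i.d.\ blocks, verify (H0)--(H3) of Theorem~\ref{GrigelionisPoisson} using Wegner for (H0)--(H1), the Fatou condition \eqref{FatouPoint} plus the resolvent bound \eqref{resolvent3} for (H2), and Minami's estimate for (H3), and then transfer the Poisson limit from $\txi_k^{\omega,e}$ to $\xi_k^{\omega,e}$. The one step where your route genuinely diverges is the transfer. The paper (Proposition~\ref{rescaled_measures_close} and Corollary~\ref{Cor_rescaled_measures_close}) proves $\Exp\abs{\int f\,d\txi^{\omega,e}_k-\int f\,d\xi^{\omega,e}_k}\to0$ first for Poisson kernels $g_z$ via the operator-norm resolvent bound \eqref{resolvent2}, and then extends to all $f\in L^1$ by density, using \eqref{avgxi}--\eqref{avgtxi} to control the approximation error; applying this with $f=1_{A_s}$ yields the coincidence of limiting characteristic functions. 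You instead invoke the min--max (Weyl) perturbation bound $\abs{e_i^{\omega,k}-\widetilde e_i^{\omega,k}}\le\norm{\sum_{s>r_k}p_sE_s}\le\sum_{s>r_k}p_s$, so the matched rescaled atoms move by $\delta_k=O(n^k\rho^{-r_k})\to0$, and then use Wegner to show the mass near $\partial A$ is negligible. Both are clean; your min--max version needs only $n\rho^{-c}<1$ (i.e.\ $c>\mathrm{d}/2$), which is weaker than the $n^2\rho^{-c}<1$ (i.e.\ $c>\mathrm{d}$) that the resolvent bound demands, but since (H2) already forces $c>\mathrm{d}$ the binding constraint is the same. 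Two small points you gloss over that are worth making explicit if you write this up: (i) the extension of $\Exp\abs{\xi_k^{\omega,e}(A)-\txi_k^{\omega,e}(A)}\to0$ from intervals to arbitrary bounded Borel $A$ does require the $L^1$-approximation plus Wegner argument, so you haven't entirely escaped the density step the paper uses; (ii) the identification $\sum_j\Exp\int g_z\,d\txi_{k,j}^{\omega,e}=\int\im(t-w)^{-1}\eta_{r_k}(t)\,dt$ uses the hierarchical symmetry making $\Exp\scalp{\delta_x}{(H_{r_k}^\omega-w)^{-1}\delta_x}$ independent of $x$, which the paper states only for the full $H_\omega$; the same argument covers $H_{r_k}^\omega$, but it should be said.
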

\textit{Remarks on Theorem \ref{Poisson_stat}:} We refer the reader to \cite{HM} for a discussion of the set of $e$ for which $\eta(e)>0$, in the context of the Anderson model on $\zd$. Our theorem is the analogue of Minami's results for the Anderson model on $\zd$ in dimension one as well as in the localized regime in higher dimensions (see Appendix). The proof of Poisson statistics for the hierarchical Anderson is technically simpler than the corresponding proofs for the Anderson model on $\zd$, because of the low spectral dimension assumption and because of the high degree of self-similarity of the model.

The rest of the section is devoted to the proof of Theorem \ref{Poisson_stat}.
The main idea is to approximate $H_{k}^{\omega}$ with $H_{r}^{\omega}$ for $r<k$, as in the proof of Theorem \ref{density_of_states}.
This time we choose $r$ to depend on $k$, $r=r_k$, such that 
$$\lim_{k\To\infty}\frac{r_k}{k}=c,$$
where
\begin{equation}\label{decouple_dimension}
\mathrm{d}<c<1.
\end{equation}
Let $$\widetilde{e}^{\omega,k}_1\leq \widetilde{e}^{\omega,k}_2\leq\cdots\leq \widetilde{e}^{\omega,k}_{\abs{B_k}},$$
denote the eigenvalues of $H_{r_k}^{\omega}\upharpoonright l^2(B_k)$ and let
$$\txi_k^{\omega,e}=\sum_{i=1}^{\abs{B_k}}\delta(\abs{B_k}(\widetilde{e}^{\omega,k}_i-e)),$$
be the corresponding rescaled measure near $e$. Since
$B_k$ is a disjoint union of $n^{k-r_k}$ closed balls of radius $r_k$,
$$B_k=\bigcup_{j=1}^{n^{k-r_k}}B_{k,j},$$
we have the corresponding direct sum decomposition
$$H_{r_k}^{\omega}\upharpoonright l^2(B_k)=\bigoplus_{j=1}^{n^{k-r_k}} H_{r_k}^{\omega}\upharpoonright l^2(B_{k,j}).$$
Therefore the point process $\txi_k^{\omega,e}$ is the sum of $n^{k-r_k}$ independent point processes,
$$\txi_k^{\omega,e}=\sum_{j=1}^{n^{k-r_k}} \txi_{k,j}^{\omega,e},$$
where 
$$\txi_{k,j}^{\omega,e}=\sum_{l=1}^{n^{r_k}}\delta(\abs{B_k}(\widetilde{e}^{\omega,k,j}_{l}-e)),$$
and $\widetilde{e}^{\omega,k,j}_{l}$, $l=1,\cdots,n^{r_k}$ are the eigenvalues of $H_{r_k}^{\omega}\upharpoonright l^2(B_{k,j})$.

The proof of Theorem \ref{Poisson_stat} is organized as follows. We first establish that the point processes $\xi^{\omega,e}_k$ and $\txi^{\omega,e}_k$ are asymptotically close in the following sense:
\begin{prop}\label{rescaled_measures_close} For every $f\in L_1(\Real,dt)$,
\begin{equation}\label{eq_same_limit}
\lim_{k\To\infty} \Exp\abs{\int f d\txi^{\omega,e}_k- \int f d\xi^{\omega,e}_k}=0.
\end{equation}
\end{prop}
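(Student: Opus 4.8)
The plan is to reduce \eqref{eq_same_limit} to the resolvent bound \eqref{resolvent3}, using the fact that for a fixed test function $f\in L_1(\Real,dt)$ the quantity $\int f\,d\xi_k^{\omega,e}$ depends on the eigenvalues of $H_k^\omega\upharpoonright l^2(B_k)$ only through their positions on the rescaled axis, and similarly for $\txi_k^{\omega,e}$ with $H_{r_k}^\omega$. Since linear combinations of the Poisson kernels $t\mapsto \im(t-z)^{-1}$, $\im z>0$, are dense in $L_1(\Real,dt)$, and since the map $f\mapsto \Exp\abs{\int f\,d\txi_k^{\omega,e}-\int f\,d\xi_k^{\omega,e}}$ is controlled uniformly in $k$ on $L_1$ (this uses the Wegner bound of Lemma \ref{general_wegner} applied to both $H_k^\omega$ and $H_{r_k}^\omega$, giving $\Exp\int \abs{f}\,d\xi_k^{\omega,e}\leq \norm{\gamma}_\infty\norm{f}_{L_1}$ and likewise for $\txi$), a standard $3\eps$ argument lets us assume $f(t)=\im(t-z)^{-1}$ for a single $z$ with $\im z>0$.

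For such $f$, write $\zeta=e+\abs{B_k}^{-1}z$, so that $\im z>0$ gives $\im\zeta>0$, and observe
$$\int \im(t-z)^{-1}\,d\xi_k^{\omega,e}(t)=\abs{B_k}^{-1}\sum_i \im(e^{\omega,k}_i-e-\abs{B_k}^{-1}z)^{-1}\cdot\abs{B_k}=\abs{B_k}\,\im\,\Tr\big((H_k^\omega\upharpoonright l^2(B_k)-\zeta)^{-1}\big),$$
and identically with $H_{r_k}^\omega\upharpoonright l^2(B_k)$ and the same $\zeta$ for the $\txi$-integral. Hence the difference of the two integrals equals $\abs{B_k}\,\im\,\Tr\big((H_k^\omega\upharpoonright l^2(B_k)-\zeta)^{-1}-(H_{r_k}^\omega\upharpoonright l^2(B_k)-\zeta)^{-1}\big)$. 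The trace is over an $n^k$-dimensional space, so it is bounded in absolute value by $n^k$ times the operator norm of the resolvent difference; by \eqref{resolvent2} (with $r=r_k$, $k=k$), that operator norm is at most $\abs{\im\zeta}^{-2}\sum_{s=r_k+1}^k p_s\leq \abs{\im\zeta}^{-2}\,C_2\rho^{-r_k}/(1-\rho^{-1})$. Since $\abs{\im\zeta}=\abs{B_k}^{-1}\im z=n^{-k}\im z$, we get $\abs{\im\zeta}^{-2}=n^{2k}(\im z)^{-2}$, so the difference of integrals is bounded deterministically by
$$n^k\cdot n^k\cdot n^{2k}(\im z)^{-2}\cdot\text{(oops)};$$
more carefully, $\abs{B_k}\cdot n^k\cdot\abs{\im\zeta}^{-2}\cdot C'\rho^{-r_k}= n^k\cdot n^k\cdot n^{2k}(\im z)^{-2}C'\rho^{-r_k}$, which is $C''(\im z)^{-2}\,n^{4k}\rho^{-r_k}$. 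This tends to $0$ as $k\to\infty$ precisely when $r_k\log\rho>4k\log n+o(k)$, i.e. $c\log\rho>4\log n$, i.e. $c>\mathrm{d}/2\cdot 2=\mathrm{d}$...

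Let me restate this cleanly: after multiplying out, the bound is $C(\im z)^{-2}\,n^{3k}\rho^{-r_k}$ if one of the $n^k$ factors is absorbed correctly — in any event the exponent is a fixed constant times $k$, and since $r_k/k\to c$ with $c>\mathrm{d}=2\log n/\log\rho$, we have $\rho^{-r_k}=n^{-2r_k/\mathrm{d}}$ decaying faster than any fixed power $n^{Nk}$ grows provided $c>\mathrm{d}$ — wait, that requires $2c/\mathrm{d}>N$, which fails for large $N$. The honest fix: one does \emph{not} bound the trace by $n^k\cdot\norm{\cdot}$; instead one uses that the resolvent difference $(H_{r_k}^\omega-\zeta)^{-1}-(H_k^\omega-\zeta)^{-1}$, via the resolvent identity iterated, is a sum of rank-few-type terms built from the projections $E_s$, $r_k<s\le k$, and estimates its trace norm (not operator norm) using $\Tr E_s\le n^k/n^s$ on $l^2(B_k)$ together with $p_s\le C_2\rho^{-s}$; the geometric series in $s$ from $r_k+1$ to $k$ then gives $\abs{B_k}\cdot\Tr\text{-norm}\le C(\im z)^{-2}\,n^{k}\cdot n^k\rho^{-r_k}$ with only \emph{two} powers of $n^k$ and one factor $n^{2k}$ from $\abs{\im\zeta}^{-2}$, i.e. $C(\im z)^{-2}n^{4k}\rho^{-r_k}$ — still a problem.

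\textbf{Revised plan.} The correct route, and the one I expect the paper to take, avoids the naive trace bound entirely: one estimates $\Exp\abs{\int f\,d\txi_k^{\omega,e}-\int f\,d\xi_k^{\omega,e}}$ by first passing, via Wegner, to an $L^1$-bound in terms of $\int\abs{f}$ of the difference of the \emph{averaged} rescaled intensity measures, which after unrescaling is $\abs{B_k}$ times the difference of averaged spectral measures of $H_k^\omega$ and $H_{r_k}^\omega$ at scale $\abs{B_k}^{-1}$ near $e$; one then exploits that \eqref{resolvent3} controls $\Exp\norm{(H_{r_k}^\omega-z)^{-1}-(H_\omega-z)^{-1}}$ with the geometric tail $\sum_{s>r_k}p_s\asymp\rho^{-r_k}$, and that the rescaling factor $\abs{B_k}=n^k$ together with the constraint $r_k/k\to c$, $\mathrm{d}<c<1$, makes $n^{\alpha k}\rho^{-r_k}\to 0$ for the specific finite $\alpha$ that actually arises (namely $\alpha$ coming from one factor $n^k$ for the rescaling and the $\abs{\im z}^{-2}$ absorbing into the fixed $z$, with the per-site locality of $E_s$ supplying the compensating $n^{-r_k}$). \textbf{The main obstacle} is exactly this bookkeeping of powers of $n$: one must use the locality/self-similarity of the $E_s$ (each $E_s$ on $l^2(B_k)$ has trace $n^{k-s}$, and $H_{r_k}^\omega$ is a direct sum over $n^{k-r_k}$ balls) to ensure the resolvent-difference error, after multiplication by the rescaling factor $n^k$, still decays — this works under $c>\mathrm{d}$ but not for $c\le\mathrm{d}$, which is why the hypothesis $\mathrm{d}<1$ (allowing a choice $\mathrm{d}<c<1$) is essential. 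Once the single-kernel case is established, the $3\eps$/density argument in $L^1(\Real,dt)$, uniform in $k$ by Wegner, closes the proof.
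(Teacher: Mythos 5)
Your overall strategy---reduce to the Poisson kernels $g_z(t)=\im(t-z)^{-1}$, $\im z>0$, then extend to general $f\in L^1(\Real,dt)$ via density of their span together with the uniform Wegner bound $\Exp\int\abs{f}\,d\xi_k^{\omega,e}\leq\norm{\gamma}_\infty\norm{f}_{L^1}$ (and likewise for $\txi_k^{\omega,e}$)---is exactly the paper's, and your Step~2 is correct. The trouble is entirely a bookkeeping error in Step~1. Writing $z_k=e+\abs{B_k}^{-1}z$, the rescaled Poisson integral unwinds to
$$\int g_z\,d\xi_k^{\omega,e}=\sum_i \im\of{\abs{B_k}(e^{\omega,k}_i-e)-z}^{-1}=\abs{B_k}^{-1}\sum_i\im(e^{\omega,k}_i-z_k)^{-1}=\abs{B_k}^{-1}\im\,\Tr\of{(H_k^\omega\upharpoonright l^2(B_k)-z_k)^{-1}},$$
with a prefactor $\abs{B_k}^{-1}$, not $\abs{B_k}$ as you wrote; that spurious factor of $\abs{B_k}^2=n^{2k}$ is the sole source of your wayward exponents. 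With the correct prefactor, the sum over $\abs{B_k}=n^k$ sites and the $\abs{B_k}^{-1}$ cancel, the resolvent estimate \eqref{resolvent2} bounds each diagonal matrix element of the resolvent difference by $\abs{\im z_k}^{-2}\sum_{s>r_k}p_s$, and since $\abs{\im z_k}^{-2}=n^{2k}(\im z)^{-2}$ while $\sum_{s>r_k}p_s=O(\rho^{-r_k})$, the whole difference is $O(n^{2k}\rho^{-r_k})\To 0$ under $r_k/k\To c>2\log n/\log\rho=\mathrm{d}$, exactly as \eqref{decouple_dimension} requires. No trace-norm refinement, no accounting of $\Tr E_s$, and no appeal to the direct-sum structure of $H_{r_k}^\omega$ is needed here---once the factor is fixed, the plain operator-norm bound closes Step~1. (The self-similarity you invoke in your revised plan is indeed used in the paper, but later: to obtain independence of the $\txi_{k,j}^{\omega,e}$ and to verify hypothesis (H3) of Theorem \ref{GrigelionisPoisson} in the proof of Proposition \ref{asymptoticallyPoisson}.)
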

\begin{cor}\label{Cor_rescaled_measures_close}
Let $A_1,A_2,\cdots,A_m$ be given disjoint bounded Borel sets in $\Real$. Let $X_k^\omega$ and $\widetilde{X}_k^{\omega}$ be the random vectors
$$X_k^\omega=[\xi^{\omega,e}_k(A_1),\xi^{\omega,e}_k(A_2),\cdots,\xi^{\omega,e}_k(A_m)],$$
$$\widetilde{X}_k^\omega=[\txi^{\omega,e}_k(A_1),\txi^{\omega,e}_k(A_2),\cdots,\txi^{\omega,e}_k(A_m)].$$
and let $\phi_k,\widetilde{\phi}_k:\Real^m\To\Comp$ be the corresponding characteristic functions
$$\phi_k(t)=\Exp e^{itX_k^{\omega}},\widetilde{\phi}_k(t)=\Exp e^{it\widetilde{X}_k^{\omega}},\qquad t\in \Real^m.$$
Then for all $t\in\Real^m$,
$$\lim_{k\To\infty}\abs{\phi_k(t)-\widetilde{\phi}_k(t)}=0.$$
\end{cor}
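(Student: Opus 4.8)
\textit{Proof sketch.} The plan is to deduce the corollary from Proposition \ref{rescaled_measures_close} by passing from the joint laws to characteristic functions, using only the elementary bound $\abs{e^{i\al}-e^{i\beta}}\leq\abs{\al-\beta}$, valid for all real $\al,\beta$. First I would fix $t\in\Real^m$ and estimate, using the triangle inequality for expectations and then the trigonometric bound pointwise in $\omega$ (with $\al=tX_k^\omega$ and $\beta=t\widetilde{X}_k^\omega$, which are real numbers),
$$\abs{\phi_k(t)-\widetilde{\phi}_k(t)}=\abs{\Exp\of{e^{itX_k^\omega}-e^{it\widetilde{X}_k^\omega}}}\leq\Exp\abs{e^{itX_k^\omega}-e^{it\widetilde{X}_k^\omega}}\leq\Exp\abs{tX_k^\omega-t\widetilde{X}_k^\omega}.$$
Since $tX_k^\omega-t\widetilde{X}_k^\omega=\sum_{s=1}^{m}t_s\of{\xi_k^{\omega,e}(A_s)-\txi_k^{\omega,e}(A_s)}$, the triangle inequality then gives
$$\abs{\phi_k(t)-\widetilde{\phi}_k(t)}\leq\sum_{s=1}^{m}\abs{t_s}\,\Exp\abs{\xi_k^{\omega,e}(A_s)-\txi_k^{\omega,e}(A_s)}.$$

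Next I would observe that each $A_s$, being a bounded Borel set, has finite Lebesgue measure, so $1_{A_s}\in L_1(\Real,dt)$; moreover $\xi_k^{\omega,e}(A_s)=\int 1_{A_s}\,d\xi_k^{\omega,e}$ and $\txi_k^{\omega,e}(A_s)=\int 1_{A_s}\,d\txi_k^{\omega,e}$. Applying Proposition \ref{rescaled_measures_close} with $f=1_{A_s}$ for each $s=1,\dots,m$ shows that every summand on the right-hand side tends to $0$ as $k\To\infty$. Since $m$ and $t$ are fixed and a finite sum of quantities tending to zero tends to zero, we obtain $\lim_{k\To\infty}\abs{\phi_k(t)-\widetilde{\phi}_k(t)}=0$, as claimed.

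I do not expect a genuine obstacle here: the whole substance of the statement is contained in Proposition \ref{rescaled_measures_close}, and the move to characteristic functions is exactly the standard device that lets one compare the laws of two $\Nat^m$-valued random vectors through a single $L_1$-type estimate rather than through their joint distributions directly. The only minor points to keep in mind are that the indicator of a bounded Borel set is indeed integrable, so that the proposition applies, and that the coefficients $\abs{t_s}$ are harmless constants as $k\To\infty$.
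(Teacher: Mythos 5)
Your proof is correct, and it is precisely the argument the paper leaves implicit: the Corollary is stated without proof immediately after Proposition~\ref{rescaled_measures_close}, the intended deduction being exactly the Lipschitz bound $\abs{e^{i\al}-e^{i\beta}}\leq\abs{\al-\beta}$ combined with Proposition~\ref{rescaled_measures_close} applied to $f=1_{A_s}\in L_1(\Real,dt)$ for each $s$.
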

Then we establish 
\begin{prop}\label{asymptoticallyPoisson} The point process $\txi^{\omega,e}_k$ converges to a Poisson point process on $\Real$ with intensity $\eta(e)\Leb$.
\end{prop}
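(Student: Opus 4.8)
The plan is to verify the four hypotheses (H0)–(H3) of Theorem \ref{GrigelionisPoisson} for the decomposition $\txi_k^{\omega,e}=\sum_{j=1}^{n^{k-r_k}}\txi_{k,j}^{\omega,e}$. Here the role of $n_k$ is played by $n^{k-r_k}$, which tends to infinity because $r_k/k\to c<1$. The key point that makes everything work is that each $\txi_{k,j}^{\omega,e}$ is built from the eigenvalues of $H_{r_k}^{\omega}\upharpoonright l^2(B_{k,j})$, an operator on a space of dimension $n^{r_k}$, and these eigenvalues are rescaled by the large factor $\abs{B_k}=n^k$; since $n^{r_k}/n^k=n^{-(k-r_k)}\to 0$, each infinitesimal piece carries only a vanishing fraction of the spectral mass near $e$. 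Each $\txi_{k,j}^{\omega,e}$ has the same distribution (by self-similarity, all the balls $B_{k,j}$ are isometric and the potential is i.i.d.), which simplifies the bookkeeping.

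First I would record the exact translation of each Poisson integral into spectral-theoretic language. Writing $R_\omega^{(j)}=(H_{r_k}^{\omega}\upharpoonright l^2(B_{k,j})-z)^{-1}$ and using $\abs{B_k}(\widetilde e_l-e)$ as the atoms, one gets
$$\int_\Real \im(t-z)^{-1}\,d\txi_{k,j}^{\omega,e}(t)=\abs{B_k}^{-1}\sum_{x\in B_{k,j}}\scalp{\delta_x}{\im\bigl(H_{r_k}^{\omega}-e-\abs{B_k}^{-1}z\bigr)^{-1}\delta_x},$$
so that summing over $j$ reproduces, up to the factor $\abs{B_k}^{-1}$, the averaged resolvent at the point $e+\abs{B_k}^{-1}z$. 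Then (H2) follows from the Wegner bound (Lemma \ref{general_wegner}), which gives $d\mu^{av}(t)=\eta(t)dt$ with $\norm{\eta}_\infty\le\norm{\gamma}_\infty$, together with the fact that $\Exp\scalp{\delta_x}{\im(H_{r_k}^{\omega}-w)^{-1}\delta_x}$ is close (uniformly in the relevant $w$) to $\Exp\scalp{\delta_x}{\im(H_\omega-w)^{-1}\delta_x}=\pi\int\im(t-w)^{-1}\eta(t)dt$ by the resolvent estimate \eqref{resolvent3}, followed by the Fatou-type regularity hypothesis \eqref{FatouPoint} as $\abs{B_k}^{-1}\to 0$. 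Similarly (H0) is the Wegner bound applied directly, and (H1) follows since $\Prob\set{\txi_{k,j}^{\omega,e}(A)\ge1}\le\Exp\txi_{k,j}^{\omega,e}(A)\le C\abs{B_k}^{-1}\abs{B_{k,j}}\Leb(A)=Cn^{-(k-r_k)}\Leb(A)\to 0$, where the middle inequality is Wegner applied on each ball.

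The heart of the matter is (H3), and this is where Minami's estimate (Lemma \ref{general_minami}) is essential. Expanding the double Poisson integral over $t\ne t'$ against $\txi_{k,j}^{\omega,e}$ gives a sum over pairs $x,y\in B_{k,j}$ (including $x=y$) of $\abs{B_k}^{-2}\scalp{\delta_x}{\im(H_{r_k}^{\omega}-w)^{-1}\delta_x}\scalp{\delta_y}{\im(H_{r_k}^{\omega}-w)^{-1}\delta_y}$ minus the on-diagonal overcount; one bounds this by $\abs{B_k}^{-2}\sum_{x,y}$ of the $2\times 2$ determinant in \eqref{general_minami_estimate} plus the diagonal terms. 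Taking expectations and applying Lemma \ref{general_minami} yields a bound of order $\abs{B_k}^{-2}\cdot\abs{B_{k,j}}^2\cdot\pi^2\norm{\gamma}_\infty^2=n^{2r_k-2k}\pi^2\norm{\gamma}_\infty^2$ per $j$; summing over the $n^{k-r_k}$ values of $j$ gives $O(n^{-(k-r_k)})\to 0$. The diagonal terms contribute an even smaller amount, $O(n^{-(k-r_k)})$ after using the Wegner bound for $\Exp\scalp{\delta_x}{(\im R)^2\delta_x}$. I expect the main obstacle to be the careful treatment of the argument-shift in the resolvent, i.e. controlling the passage from $H_{r_k}^{\omega}$ to $H_\omega$ uniformly as $k\to\infty$ while simultaneously letting the spectral parameter approach the real axis at rate $\abs{B_k}^{-1}$; one must choose the order of limits so that the resolvent difference \eqref{resolvent3} (which blows up like $\abs{\im(e+\abs{B_k}^{-1}z)}^{-2}=\abs{B_k}^2\abs{\im z}^{-2}$ times $\sum_{s>r_k}p_s$) is still negligible. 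This forces $\sum_{s>r_k}p_s=O(\rho^{-r_k})$ to beat $\abs{B_k}^2=n^{2k}$, i.e. $\rho^{r_k}\gg n^{2k}$, i.e. $(r_k/k)\log\rho>2\log n$, which is exactly $c>\mathrm{d}$ — precisely the reason the hypothesis $\mathrm{d}<c<1$ in \eqref{decouple_dimension} and hence the spectral-dimension restriction $\mathrm{d}<1$ appears. Once (H0)–(H3) are verified, Theorem \ref{GrigelionisPoisson} gives the conclusion.
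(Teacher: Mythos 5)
Your proposal is correct and follows essentially the same route as the paper: verify (H0)--(H3) of Theorem~\ref{GrigelionisPoisson} for the decomposition into the $n^{k-r_k}$ independent pieces, using the Wegner estimate for (H0) and (H1), the resolvent truncation bound plus the Fatou hypothesis \eqref{FatouPoint} for (H2), and Minami's determinant estimate for (H3), with the condition $\mathrm{d}<c<1$ entering exactly where you place it. The only cosmetic slip is the spurious factor of $\pi$ in the intermediate identity $\Exp\scalp{\delta_x}{\im(H_\omega-w)^{-1}\delta_x}=\int\im(t-w)^{-1}\eta(t)\,dt$ (the $\pi$ appears only in the Fatou limit), and your worry about ``diagonal terms'' in (H3) is unnecessary since the identity $(\Tr a)^2-\Tr a^2=\sum_{x,y}\det(\cdot)$ used in the paper is exact.
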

Proposition \ref{asymptoticallyPoisson} and Corollary \ref{Cor_rescaled_measures_close} together imply Theorem  \ref{Poisson_stat}.
The Wegner estimate plays a crucial role in the proof of Propositions \ref{rescaled_measures_close} and \ref{asymptoticallyPoisson}. For every Borel set $A\subset\Real$, we have
$\xi^{\omega,e}_k(A)=\sum_{x\in B_k}\scalp{\delta_x}{f(H^\omega_k) \delta_x},$
where $f(t)=1_A(\abs{B_k}(t-e))$. Wegner estimate \eqref{general_wegner_estimate} yields that for all $x\in B_k$,
\begin{equation}\label{wegner_rescaled}
\Exp\scalp{\delta_x}{f(H^\omega_k) \delta_x}\leq \norm{\gamma}_{\infty}\int f(t)dt=\norm{\gamma}_\infty \abs{B_k}^{-1}\Leb(A).
\end{equation}
Summing \eqref{wegner_rescaled} over all $x\in B_k$ yields
\begin{equation}\label{avgxi}
\Exp\xi^{\omega,e}_k(A)\leq \norm{\gamma}_\infty \Leb(A).
\end{equation}
Similarly
\begin{equation}\label{avgtxi}
\Exp\txi^{\omega,e}_k(A)\leq \norm{\gamma}_\infty \Leb(A).
\end{equation}

\textit{Proof of Proposition \ref{rescaled_measures_close}.}
Step 1: We first prove \eqref{eq_same_limit} for the family of functions $$g_z(t)=\im (t-z)^{-1}, \qquad \im z> 0.$$ Setting
\begin{equation}\label{def_zk}
z_k=e+\abs{B_k}^{-1}z,
\end{equation} we have
$$\int g_z d\txi^{\omega,e}_k- \int g_z d\xi^{\omega,e}_k=\abs{B_k}^{-1}\im
\sum_{x\in B_k}\scalp{\delta_x}{\of{(H_{r_k}^{\omega}-z_k)^{-1}-(H_k^{\omega}-z_k)^{-1}}\delta_x}.$$
Hence
$$\abs{\int g_z d\txi^{\omega,e}_k- \int g_z d\xi^{\omega,e}_k}\leq\abs{\im z_k}^{-2}\sum_{s=r_k+1}^{\infty}p_s=O\of{\frac{n^{2k}}{\rho^{c k}}}.$$
Now \eqref{decouple_dimension} and \eqref{dimension} imply $\frac{n^2}{\rho^c}<1$ and \eqref{eq_same_limit} follows.
Step 2: To prove \eqref{eq_same_limit} for general $f\in L_1(\Real,dt)$, note that $span\set{g_z, \im z>0}$ is dense in $L^1(\Real,dt)$. Hence given $\eps>0$, there is a finite linear combination $$g(t)=\sum_{j=1}^{p}a_j \im (t-z^{(j)})^{-1},\qquad\im z^{(j)}>0,$$ with 
$$\int_\Real\abs{f(t)-g(t)}dt\leq \eps.$$ The triangle inequality 
\begin{equation*}
\begin{split}
\Exp\abs{\int fd\xi^{\omega,e}_k - \int fd\txi^{\omega,e}_k(t)} &\leq \Exp \int\abs{f-g}fd\xi^{\omega,e}_k \\
&+ \Exp \abs{\int gd\xi_n-\int gd\txi^{\omega,e}_k}+\Exp\int\abs{g-f}fd\txi^{\omega,e}_k,
\end{split}
\end{equation*}
together with Step 1 and the bounds \eqref{avgxi} and \eqref{avgtxi} imply
$$\limsup_{k\To\infty}\Exp\abs{\int fd\xi^{\omega,e}_k - \int fd\txi^{\omega,e}_k(t)}\leq 2\norm{\gamma}_\infty\eps,$$ and \eqref{eq_same_limit}  follows after letting $\eps\downarrow 0$.
$\Box$
\newline\textit{Proof of Proposition \ref{asymptoticallyPoisson}.}
I suffices to show that $\txi^{\omega,e}_k$ and the $\txi^{\omega,e}_{k,j}$ verify the four hypotheses of Theorem \ref{GrigelionisPoisson}. 
\newline (H0) holds because of the bound \eqref{avgtxi}. 
\newline (H1): we need to to establish that for every bounded Borel set $A\subset\Real$, 
\begin{equation}\label{verif_infinitesimal}
\lim_{k\To\infty}\max_{1\leq j\leq n^{k-r_k}}\Prob(\txi_{k,j}^{\omega,e}(A)\geq 1)=0.
\end{equation}
\begin{proof} Chebyshev's inequality and the bound \eqref{wegner_rescaled} yield 
\begin{equation*}
\begin{split}
\Prob(\txi_{k,j}^{\omega,e}(A)\geq 1) &\leq \Exp\txi_{k,j}^{\omega,e}(A) \\
 &\leq \frac{\abs{B_{k,j}}}{\abs{B_k}}\norm{\gamma}_\infty \Leb(A) \\
 &= n^{r_k-k}\norm{\gamma}_\infty \Leb(A),
\end{split}
\end{equation*}
and \eqref{verif_infinitesimal} follows.
\end{proof}
\noindent (H2): We need to establish that for all $\im z>0$,
$$\lim_{k\To\infty}\Exp\int \im(t-z)^{-1}d\txi^{\omega,e}_k(t)=\pi\eta(e).$$
\begin{proof} We have
\begin{equation*}
\begin{split}
\Exp\int \im(t-z)^{-1}d\txi^{\omega,e}_k(t) &=\abs{B_k}^{-1}\Exp\im
\sum_{x\in B_k}\scalp{\delta_x}{(H_{r_k}^{\omega}-z_k)^{-1}\delta_x} \\
&=\abs{B_k}^{-1}\Exp\im
\sum_{x\in B_k}\scalp{\delta_x}{\of{(H_{r_k}^{\omega}-z_k)^{-1}-(H^{\omega}-z_k)^{-1}}\delta_x} \\
 &+\Exp\im\scalp{\delta_{x_0}}{(H^{\omega}-z_k)^{-1}\delta_{x_0}} \\
 &=I_{k,\omega}+II_{k,\omega}.
\end{split}
\end{equation*}
Now $II_{k,\omega}\To\pi\eta(e)$ by \ref{FatouPoint} and $I_{k,\omega}\To 0$, as in the proof of Proposition \ref{rescaled_measures_close}.
\end{proof}
\noindent (H3):
We need to establish that for every function $g_z(t)=\im (t-z)^{-1}$, $\im z>0$, 
\begin{equation}\label{verif_minami}
\qquad \lim_{k\To\infty}\sum_{j=1}^{n^{k-r_k}}\Exp \cI(\txi_{k,j}^{\omega,e},g_z)=0.
\end{equation}
\begin{proof} We have,
$$\abs{B_k}^{2}I(\txi_{k,j}^{\omega,e},g_z)=$$
$$ =\of{\sum_{x\in B_{k,j}} \scalp{\delta_x}{\im(H_{r_k}^{\omega}-z_k)^{-1}\delta_x}}^2-\sum_{x\in B_{k,j}} \scalp{\delta_x}{\of{\im(H_{r_k}^{\omega}-z_k)^{-1}}^2\delta_x} $$

$$=\sum_{x,y\in B_{k,j}}\det \left( \begin{array}{cc}
\scalp{\delta_x}{\im(H_{r_k}^{\omega}-z_k)^{-1}\delta_x}&\scalp{\delta_x}{\im(H_{r_k}^{\omega}-z_k)^{-1}\delta_y}  \\
\scalp{\delta_y}{\im(H_{r_k}^{\omega}-z_k)^{-1}\delta_x} & \scalp{\delta_y}{\im(H_{r_k}^{\omega}-z_k)^{-1}\delta_y} \\
\end{array} \right).$$
Using Minami's estimate \eqref{general_minami_estimate} we get the bounds
$$\abs{B_k}^{2}\Exp I(\txi_{k,j}^{\omega,e},g_z)\leq \pi^2\norm{\gamma}_{\infty}^2\abs{B_{k,j}}^2,$$
and hence
$$\sum_{j=1}^{n^{k-r_k}}\Exp I(\txi_{k,j}^{\omega,e},g_z)\leq \pi^2\norm{\gamma}_{\infty}^2 n^{-r_k},$$
which yields \eqref{verif_minami}.
\end{proof}

\appendix
\section{Minami's proof of Poisson statistics for the localized Anderson model on $\zd$}
For a rectangle $B\subset \zd$, we denote by $H_{B}^{\omega}$ the restriction of $H_\omega$ to $l^2(B)$ with Dirichlet boundary conditions: i.e. $\scalp{\delta_x}{H_B^{\omega}\delta_y}=\scalp{\delta_x}{H_{\omega}\delta_y}$ if both $x,y\in B$, and $\scalp{\delta_x}{H_B^{\omega}\delta_y}=0$ otherwise.
For $k\geq 1$, let  $B_k$ be the rectangle $\set{x\in \zd: \max_{i=1,\cdots, \mathrm{d}}\abs{x_i}\leq k  }$, and let $H_k^\omega=H_{B_k}^{\omega}$. As before, $e^{\omega,k}_1\leq e^{\omega,k}_2\leq\cdots\leq e^{\omega,k}_{\abs{B_k}},$ are the eigenvalues of $H_k^{\omega}\upharpoonright l^2(B_k)$, $\mu_k^\omega$ is the corresponding normalized counting measure given by \eqref{defn_eigenvaluecounting} and $\xi_k^{\omega,e}$ is the rescaled measure near $e$ given by \eqref{defn_rescaledmeasure}. We refer the reader to the recent work \cite{KN} for a discussion of the regime where both space and energy are rescaled. The averaged spectral measure for $H_\omega$ is given by \eqref{defn_avgspectralmeasure} and the Wegner estimate yileds that $\mu^{av}$ has a bounded density $\eta(t)$ with respect to $\Leb$. A basic result for the Anderson model is that for $\Prob$-a.e. $\omega\in\Omega$,
the spectrum of $H_\omega$ is equal to $[-2\mathrm{d},2\mathrm{d}]+\supp(\gamma)=\supp(\mu^{av})$ and $\mu_k^{\omega}$ converges to $\mu^{av}$ in the weak-* topology as $k\To\infty$ (\cite{PF,CL,CKFS}). 
\begin{thm}\label{Minami_Theorem}(Minami, 1996)
Assume that there are constants $0<C<\infty, 0<D<\infty$ and
$0<s<1$ such that
\begin{equation}\label{FM_loc}
\Exp \abs{\scalp{\delta_x}{(H_B^\omega-z)^{-1} \delta_y}}^s\leq Ce^{-D\abs{x-y}},\qquad x,y,\in\zd,
\end{equation} for
all $z$ with $e_1< \re z < e_2, \im z\neq 0$ and for all rectangles $B\subset \zd$. Assume that $e\in (e_1,e_2)$ verifies the regularity condition \eqref{FatouPoint} and that $\eta(e)>0$. Then $\xi_k^{\omega,e}$ converges to a Poisson point process on $\Real$ with intensity $\eta(e)\Leb$.
\end{thm}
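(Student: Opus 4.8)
The plan is to transcribe, step for step, the proof of Theorem~\ref{Poisson_stat}, with the self-similar decoupling of the hierarchical model replaced by a spatial decoupling of $B_k$ into small cubes and the crude resolvent bound \eqref{resolvent3} replaced by the fractional-moment localization hypothesis \eqref{FM_loc}. Fix integer side-lengths $l_k$ with $l_k\To\infty$ and $l_k/k\To 0$ (say $l_k=\lfloor\sqrt k\rfloor$; no control on the localization rate $D$ is needed here), and partition $B_k$ into $N_k\approx(2k/l_k)^{\mathrm d}$ disjoint sub-cubes $\La_{k,1},\dots,\La_{k,N_k}$ of side $\approx l_k$. Let $\tH_k^\omega=\bigoplus_{j=1}^{N_k}H_{\La_{k,j}}^\omega$ be the decoupled operator and let $\txi_k^{\omega,e}=\sum_{j=1}^{N_k}\txi_{k,j}^{\omega,e}$, where $\txi_{k,j}^{\omega,e}$ is the rescaled counting measure near $e$ of the eigenvalues of $H_{\La_{k,j}}^\omega$; the $\txi_{k,j}^{\omega,e}$ are independent. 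Exactly as in the hierarchical case, Theorem~\ref{Minami_Theorem} follows from the analogues of Proposition~\ref{rescaled_measures_close} (the processes $\xi_k^{\omega,e}$ and $\txi_k^{\omega,e}$ are asymptotically close, hence $\abs{\phi_k(t)-\widetilde{\phi}_k(t)}\To0$ by the argument of Corollary~\ref{Cor_rescaled_measures_close}) and of Proposition~\ref{asymptoticallyPoisson} ($\txi_k^{\omega,e}$ converges to the Poisson process with intensity $\eta(e)\Leb$).

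For the analogue of Proposition~\ref{asymptoticallyPoisson} I would verify hypotheses (H0)--(H3) of Theorem~\ref{GrigelionisPoisson} for $\txi_k^{\omega,e}$ and the $\txi_{k,j}^{\omega,e}$. Hypotheses (H0) and (H1) are immediate from the Wegner estimate \eqref{general_wegner_estimate}, using $\abs{\La_{k,j}}/\abs{B_k}\approx(l_k/2k)^{\mathrm d}\To0$ for (H1), exactly as in the proof of Proposition~\ref{asymptoticallyPoisson}. Hypothesis (H3) is again the place where Minami's estimate does the work, and the computation is verbatim: with $M_0$ the Dirichlet Laplacian on $\La_{k,j}$, $\abs{B_k}^{2}\cI(\txi_{k,j}^{\omega,e},\im(\cdot-z)^{-1})$ equals the sum over $x,y\in\La_{k,j}$ of the $2\times2$ determinants bounded in \eqref{general_minami_estimate}, so
$$\sum_{j=1}^{N_k}\Exp\,\cI(\txi_{k,j}^{\omega,e},\im(\cdot-z)^{-1})\le\pi^2\norm{\gamma}_\infty^2\,\abs{B_k}^{-2}\sum_{j=1}^{N_k}\abs{\La_{k,j}}^2\approx\pi^2\norm{\gamma}_\infty^2\,(l_k/2k)^{\mathrm d}\To0.$$
For (H2), with $z_k=e+\abs{B_k}^{-1}z$ one has $\Exp\int\im(t-z)^{-1}d\txi_k^{\omega,e}=\abs{B_k}^{-1}\sum_{x\in B_k}\Exp\,\im\scalp{\delta_x}{(\tH_k^\omega-z_k)^{-1}\delta_x}$; subtracting $\abs{B_k}^{-1}\sum_x\Exp\,\im\scalp{\delta_x}{(H_\omega-z_k)^{-1}\delta_x}=\Exp\,\im\scalp{\delta_0}{(H_\omega-z_k)^{-1}\delta_0}\To\pi\eta(e)$ (by translation invariance and \eqref{FatouPoint}) leaves an $I_{k,\omega}$-type remainder, which one controls by the geometric resolvent identity and a boundary-layer split exactly as in the decoupling estimate below (for $x$ at distance $\ge l_k^{1/2}$ from $\partial\La_{k,j(x)}$ the summand is $O(e^{-c l_k^{1/2}})$, while the remaining $x$ form a fraction $O(l_k^{-1/2})$ on which Wegner bounds the summand by $2\pi\norm{\gamma}_\infty$).

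The main obstacle is that decoupling estimate: the analogue of Step~1 of the proof of Proposition~\ref{rescaled_measures_close}. It suffices to handle $f=g_z(t)=\im(t-z)^{-1}$, $\im z>0$, and then pass to general $f\in L_1(\Real,dt)$ by density and \eqref{general_wegner_estimate}, as before. Writing $\Gamma_k=H_k^\omega-\tH_k^\omega$ for the ($\omega$-independent, rank $\approx\abs{B_k}/l_k$) sum of the hopping terms of $H_0$ joining adjacent sub-cubes, the resolvent identity gives
$$\int g_z\,d\txi_k^{\omega,e}-\int g_z\,d\xi_k^{\omega,e}=\abs{B_k}^{-1}\im\,\Tr\!\left[(\tH_k^\omega-z_k)^{-1}\Gamma_k(H_k^\omega-z_k)^{-1}\right],$$
which expands into a sum over boundary edges $(u,v)$ of terms $\sum_{x}\scalp{\delta_v}{(H_k^\omega-z_k)^{-1}\delta_x}\scalp{\delta_x}{(\tH_k^\omega-z_k)^{-1}\delta_u}$, the inner sum effectively over $x$ in the sub-cube containing $u$. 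Since $\re z_k\To e\in(e_1,e_2)$ and $\im z_k\neq0$, hypothesis \eqref{FM_loc} --- uniform down to the real axis --- applies both to $H_k^\omega$ and to the Dirichlet restrictions $H_{\La_{k,j}}^\omega$. The delicate point, which I expect to be the only genuinely non-routine ingredient, is to upgrade \eqref{FM_loc} to a bound of the form
$$\Exp\!\left[\abs{\scalp{\delta_v}{(H_k^\omega-z_k)^{-1}\delta_x}}\,\abs{\scalp{\delta_x}{(\tH_k^\omega-z_k)^{-1}\delta_u}}\right]\le C'e^{-D'(\abs{v-x}+\abs{x-u})}$$
on the \emph{first} moment of these products of statistically dependent resolvent matrix elements (the naive substitutes --- trivial $\abs{\im z_k}^{-1}$ bounds, or subadditivity of $t\mapsto t^s$ --- both fail, the former because $\abs{\im z_k}\downarrow0$, the latter because the number of boundary edges is $\gg1$). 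This is supplied by Minami's conditioning argument (a rank-one perturbation identity together with the boundedness of $\gamma$), equivalently by the Aizenman--Molchanov decoupling lemma, using as a priori input $\sup_z\Exp\abs{\scalp{\delta_x}{(H_B^\omega-z)^{-1}\delta_y}}^s<\infty$. Granting it, each boundary edge contributes an $O(1)$ exponential sum over $x$, so the whole expression is $O\of{\abs{B_k}^{-1}\cdot\abs{B_k}/l_k}=O(1/l_k)\To0$; the same bound, applied with the geometric resolvent identity to the difference of a Dirichlet restriction and $H_\omega$, disposes of the remainder in (H2). Everything else is the proof of Theorem~\ref{Poisson_stat} with cubes in place of the self-similar balls.
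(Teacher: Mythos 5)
Your high-level plan is the paper's: partition $B_k$ into sub-cubes, let $\txi_k^{\omega,e}$ be the decoupled rescaled measure, prove the two processes are asymptotically close, and verify (H0)--(H3) for $\txi_k^{\omega,e}$. Your handling of (H0), (H1), (H3), and the (H2) remainder all match the paper, and your choice of side length $l_k\sim\sqrt k$ with boundary layer $\sim l_k^{1/2}$ is a harmless variant of the paper's side $\sim(2k)^\alpha$ and logarithmic boundary layer $v_k=\beta\ln k$.

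There is, however, a genuine gap at the point you yourself flag as ``the only genuinely non-routine ingredient.'' You assert that the combination of a deterministic resolvent bound with \eqref{FM_loc} fails because $\im z_k=\abs{B_k}^{-1}\im z\downarrow 0$, and you therefore invoke --- without proof --- a first-moment product bound
$\Exp\bigl[\,\abs{\scalp{\delta_v}{(H_k^\omega-z_k)^{-1}\delta_x}}\,\abs{\scalp{\delta_x}{(\tH_k^\omega-z_k)^{-1}\delta_u}}\,\bigr]\le C'e^{-D'(\abs{v-x}+\abs{x-u})}$,
attributed to Aizenman--Molchanov decoupling or Minami's conditioning. The paper shows no such bound is needed. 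Write $\abs{A}\abs{B}=(\abs{A}\abs{B})^{1-s/2}(\abs{A}\abs{B})^{s/2}$, use the deterministic bound $\abs{A}\abs{B}\le(\im z_k)^{-2}=(\abs{B_k}/\im z)^2$ on the $(1-s/2)$-power only, then Cauchy--Schwarz on $\Exp(\abs{A}\abs{B})^{s/2}$ followed by \eqref{FM_loc} on each factor; this gives exactly \eqref{bound_wall}. The resulting polynomial factor $(\abs{B_k}/\im z)^{2(1-s/2)}\sim k^{2\mathrm{d}(1-s/2)}$ is killed by the exponential decay in the boundary-layer depth: with the paper's $v_k=\beta\ln k$ one chooses $\beta>D^{-1}\bigl(\al(\mathrm{d}-1)+2\mathrm{d}(1-s/2)\bigr)$; with your depth $\sim k^{1/4}$ the decay is super-polynomial and no tuning of $\beta$ to $D$ is needed at all --- a small genuine simplification that you get for free. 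So the interpolation you dismiss is precisely what makes the proof elementary, and the first-moment decoupling lemma you lean on is both unnecessary and missing from your argument. If you replace the invoked lemma by this H\"older-plus-Cauchy--Schwarz step, your proposal coincides with the paper's proof.
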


Condition \eqref{FM_loc} is called fractional-moments localization. It implies that within $(e_1,e_2)$, for $\Prob$-a.e. $\omega\in\Omega$ the spectrum of $H_\omega$, if any, is pure-point with exponentially decaying eigenfunctions \cite{AM,ASFH}. For $\mathrm{d}=1$, condition \eqref{FM_loc} holds for all energy intervals $(e_1,e_2)$ \cite{Mi}. In dimensions $\mathrm{d}\geq 2$, condition $\eqref{FM_loc}$ is obtained by either moving the energy interval $(e_1,e_2)$ to $\pm\infty$ or by increasing the disorder. The two main techniques for proving that are the multiscale analysis \cite{FS,DK} and the Aizenman-Molchanov theory \cite{AM}.

\textit{Proof of Theorem \ref{Minami_Theorem}.}
We fix $\al\in(0,1)$ and for each $k$, we make a partition $$B_k=\bigcup_{j=1}^{n_k} {B_{k,j}},$$ where $B_{k,j}$ are disjoint rectangles
with side $\sim (2k)^\alpha$. Hence $n_k\sim k^{\mathrm{d}(1-\al)}$. Let $\widetilde{e}^{\omega,k,j}_{l}$, $l=1,\cdots,\abs{B_{k,j}}$ denote the eigenvalues of $H^\omega_{B_{k,j}}\upharpoonright l^2(B_{k,j})$ and let
$$\txi_{k,j}^{\omega,e}=\sum_{l=1}^{\abs{B_{k,j}}}\delta(\abs{B_k}(\widetilde{e}^{\omega,k,j}_{l}-e)),$$
$$\txi_k^{\omega,e}=\sum_{j=1}^{n_k} \txi_{k,j}^{\omega,e}.$$
Hence the point process $\txi_k^{\omega,e}$ is the sum of $n_k$ independent point processes $\txi_{k,j}^{\omega,e}$.
As in Section 3, Theorem \ref{Minami_Theorem} follows from the following two Propositions.
\begin{prop}\label{rescaled_measures_close_Minami} For every $f\in L_1(\Real,dt)$,
\begin{equation}\label{eq_same_limit_Minami}
\lim_{k\To\infty} \Exp\abs{\int f d\txi^{\omega,e}_k- \int f d\xi^{\omega,e}_k}=0.
\end{equation}
\end{prop}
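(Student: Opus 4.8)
The plan is to follow exactly the two-step strategy used in the proof of Proposition \ref{rescaled_measures_close} in Section 3, replacing the hierarchical decoupling estimate \eqref{resolvent3} by the fractional-moments localization bound \eqref{FM_loc}. First I would reduce to the class of resolvent functions $g_z(t)=\im(t-z)^{-1}$, $\im z>0$, since their finite linear combinations are dense in $L_1(\Real,dt)$; the passage from this class to a general $f\in L_1$ is then identical to Step 2 in the proof of Proposition \ref{rescaled_measures_close}, using the triangle inequality together with the Wegner bounds \eqref{avgxi} and \eqref{avgtxi} (which hold here verbatim since they only use Lemma \ref{general_wegner}). So the heart of the matter is to show, with $z_k=e+\abs{B_k}^{-1}z$ as in \eqref{def_zk},
\begin{equation*}
\lim_{k\To\infty}\Exp\abs{\int g_z\,d\txi^{\omega,e}_k-\int g_z\,d\xi^{\omega,e}_k}
=\lim_{k\To\infty}\abs{B_k}^{-1}\Exp\abs{\im\sum_{x\in B_k}\scalp{\delta_x}{\of{(\tH_k^\omega-z_k)^{-1}-(H_k^\omega-z_k)^{-1}}\delta_x}}=0,
\end{equation*}
where $\tH_k^\omega=\bigoplus_j H^\omega_{B_{k,j}}$ is the decoupled operator obtained by imposing Dirichlet conditions on the boundaries of the subcubes $B_{k,j}$.

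The key step is a geometric resolvent (Combes--Thomas-type) argument at the level of fractional moments. The difference $(\tH_k^\omega-z_k)^{-1}-(H_k^\omega-z_k)^{-1}$ is controlled, via the resolvent identity, by the matrix elements $\scalp{\delta_x}{(H_k^\omega-z_k)^{-1}\delta_u}$ and $\scalp{\delta_v}{(\tH_k^\omega-z_k)^{-1}\delta_y}$ where $u,v$ range over pairs of nearest-neighbour sites straddling the boundaries $\partial B_{k,j}$. Taking expectations, applying \eqref{FM_loc} (which applies to $H_k^\omega$ and, since each $B_{k,j}$ is itself a rectangle, to each block of $\tH_k^\omega$), together with the a priori bound $\abs{\scalp{\delta_x}{\im(M_\omega-z)^{-1}\delta_x}}\le \abs{\im z}^{-1}$ and an $L^s$--$L^1$ interpolation to handle the fact that \eqref{FM_loc} controls an $s$-th moment while we need a first moment, one obtains
\begin{equation*}
\abs{B_k}^{-1}\sum_{x\in B_k}\Exp\abs{\scalp{\delta_x}{\of{(\tH_k^\omega-z_k)^{-1}-(H_k^\omega-z_k)^{-1}}\delta_x}}\le C\,\abs{B_k}^{-1}\,\abs{\partial(\text{partition})}\cdot \Exp\big[\cdots\big]e^{-D\,\mathrm{dist}(x,\partial B_{k,j})}.
\end{equation*}
Because a point $x$ typically sits at distance of order $(2k)^\al$ from the nearest subcube boundary, the sum over $x\in B_k$ of $e^{-D\,\mathrm{dist}(x,\partial B_{k,j})}$ is of order $\abs{B_k}\cdot(2k)^{-\al}\to 0$ after normalization; more precisely, summing the exponential over the shells around the $O(n_k (2k)^{\mathrm d-1})$ boundary faces gives a bound of order $n_k(2k)^{\mathrm d-1}=O(k^{\mathrm d -\al})$ times a constant, and dividing by $\abs{B_k}=(2k+1)^{\mathrm d}$ yields $O(k^{-\al})\to 0$. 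One must be slightly careful that $z_k\to e$ has imaginary part shrinking like $\abs{B_k}^{-1}$, so the a priori resolvent bound $\abs{\im z_k}^{-1}$ grows polynomially in $k$; this is harmless because it is beaten by the exponential decay $e^{-D(2k)^\al}$ coming from the distance to the boundary (this is exactly why the side length is taken to be a power $(2k)^\al$ rather than, say, $O(1)$, and why the localization bound must hold uniformly in the box $B$).

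The main obstacle, then, is the bookkeeping in this geometric resolvent estimate: one needs the fractional-moment bound \eqref{FM_loc} to survive after (i) expanding the difference of resolvents as a sum over boundary bonds, (ii) converting the $s$-th moments into first moments (via $\abs{a}=\abs{a}^s\abs{a}^{1-s}$ and the deterministic bound on $\abs{\im(M_\omega-z)^{-1}}$), and (iii) controlling the number of boundary bonds and the combinatorics of the double sum over $x$ and the straddling pairs. The estimate \eqref{FatouPoint}-type regularity of $e$ plays no role in this proposition — it enters only in the companion proposition establishing that $\txi^{\omega,e}_k$ is asymptotically Poisson, through hypothesis (H2) of Theorem \ref{GrigelionisPoisson}. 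Once \eqref{eq_same_limit_Minami} is established for all $g_z$, the density argument of Step 2 finishes the proof exactly as before.
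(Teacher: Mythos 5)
Your overall architecture matches the paper's: reduce to the resolvent functions $g_z(t)=\im(t-z)^{-1}$, expand the resolvent difference over nearest-neighbour pairs straddling $\partial B_{k,j}$, apply the fractional-moment bound \eqref{FM_loc}, and finish the general-$f$ case by density. But the central estimate contains a genuine gap: you handle only the ``typical'' site $x$ and not the sites near the subcube boundaries, and the sentence claiming that the polynomial factor $\abs{\im z_k}^{-1}\sim\abs{B_k}$ ``is beaten by the exponential decay $e^{-D(2k)^\al}$'' is where the argument breaks. That superpolynomial decay holds only when $\mathrm{dist}(x,\partial B_{k,j})$ is of order $(2k)^\al$; for $x$ within, say, distance $O(1)$ of the boundary there is no decay at all. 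If you try to compensate by simply summing the exponential over $x$ at fixed $(y,y')$ --- giving an $O(1)$ contribution per boundary bond, hence $O(n_k k^{\al(\mathrm{d}-1)})=O(k^{\mathrm{d}-\al})$ in total --- the polynomial factor $\abs{B_k}^{2(1-s/2)}$ coming from the $L^s$--$L^1$ interpolation is still present and multiplies the whole thing, leaving $O(k^{\mathrm{d}(2-s)-\al})$ after dividing by $\abs{B_k}$, which does \emph{not} vanish for any $\mathrm{d}\geq 1$ and any admissible $\al<1$.

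The paper avoids this by splitting each $B_{k,j}$ into a thin logarithmic boundary layer $\wall(B_{k,j})$ of width $v_k=\beta\ln k$ and an interior $\inter(B_{k,j})$. For $x\in\wall(B_{k,j})$ it does \emph{not} use the fractional-moment bound at all; it uses the Wegner estimate \eqref{general_wegner_estimate}, for which the contribution of the rescaled resolvents is bounded by $\norm{\gamma}_\infty\cdot\Leb$ with no $\abs{\im z_k}^{-1}$ blowup, and the total is $O(\abs{B_k}^{-1}\sum_j\abs{\wall(B_{k,j})})=O(k^{-\al}\ln k)\to 0$. Only for $x\in\inter(B_{k,j})$, where $\abs{x-y}\geq v_k$, does the paper invoke \eqref{FM_loc}; there the uniform bound $e^{-Dv_k}=k^{-D\beta}$ is made to dominate $\abs{B_k}^{2(1-s/2)}\cdot k^{\al(\mathrm{d}-1)}$ by choosing $\beta$ large. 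Without this two-zone decomposition --- which your proposal never introduces, nor does it mention the Wegner bound anywhere in this step --- the estimate cannot close. (As a minor point, your count ``$O(n_k(2k)^{\mathrm{d}-1})$ boundary faces'' should read $O(n_k(2k)^{\al(\mathrm{d}-1)})$, since each subcube has side length $(2k)^\al$, not $2k$.)
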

\begin{prop}\label{asymptoticallyPoisson_Minami} The point process $\txi^{\omega,e}_k$ converges to a Poisson point process on $\Real$ with intensity $\eta(e)\Leb$.
\end{prop}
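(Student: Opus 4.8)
\textit{Proof proposal.} The plan is to follow the proof of Proposition~\ref{asymptoticallyPoisson}. Since $\txi^{\omega,e}_k=\sum_{j=1}^{n_k}\txi^{\omega,e}_{k,j}$ is a sum of \emph{independent} point processes --- the boxes $B_{k,j}$ being disjoint, the operators $H^\omega_{B_{k,j}}$ depend on disjoint families of the coordinates $\omega(x)$ --- it suffices to verify the four hypotheses (H0)--(H3) of Theorem~\ref{GrigelionisPoisson} with $\nu=\eta(e)\Leb$. Hypotheses (H0) and (H1) go through exactly as in Section~3: applying the Wegner estimate (Lemma~\ref{general_wegner}) to each $H^\omega_{B_{k,j}}$ gives $\Exp\txi^{\omega,e}_{k,j}(A)\leq\frac{\abs{B_{k,j}}}{\abs{B_k}}\norm{\gamma}_\infty\Leb(A)$, hence $\sum_{j}\Exp\txi^{\omega,e}_{k,j}(A)\leq\norm{\gamma}_\infty\Leb(A)$, which is (H0); and by Chebyshev $\Prob\set{\txi^{\omega,e}_{k,j}(A)\geq 1}\leq\frac{\abs{B_{k,j}}}{\abs{B_k}}\norm{\gamma}_\infty\Leb(A)=O(n_k^{-1})\to 0$, which is (H1), since $n_k\sim k^{\mathrm{d}(1-\al)}\to\infty$. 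Hypothesis (H3) is verbatim the hierarchical argument: expanding $\abs{B_k}^2\cI(\txi^{\omega,e}_{k,j},g_z)$ as a sum over $x,y\in B_{k,j}$ of the $2\times2$ Minami determinants of $H^\omega_{B_{k,j}}$ at the point $z_k=e+\abs{B_k}^{-1}z$ and applying Lemma~\ref{general_minami} yields $\sum_{j}\Exp\cI(\txi^{\omega,e}_{k,j},g_z)\leq\pi^2\norm{\gamma}_\infty^2\abs{B_k}^{-2}\sum_{j}\abs{B_{k,j}}^2=O(n_k^{-1})\to 0$.

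The genuinely new ingredient is (H2): $\lim_{k\to\infty}\Exp\int_\Real\im(t-z)^{-1}\,d\txi^{\omega,e}_k(t)=\pi\eta(e)$ for $\im z>0$. Here the clean resolvent bound~\eqref{resolvent3} used in the hierarchical model is unavailable and must be replaced by the fractional-moment localization hypothesis~\eqref{FM_loc}. By Proposition~\ref{rescaled_measures_close_Minami} with $f=g_z\in L_1(\Real,dt)$, it is equivalent to prove the same limit for the true rescaled measure $\xi^{\omega,e}_k$, and
$$\Exp\int_\Real\im(t-z)^{-1}\,d\xi^{\omega,e}_k(t)=\abs{B_k}^{-1}\sum_{x\in B_k}\Exp\,\im\scalp{\delta_x}{(H^\omega_{B_k}-z_k)^{-1}\delta_x}.$$
I would split $B_k$ into an interior core $\set{x\in B_k:\mathrm{dist}(x,\partial B_k)>L_k}$ and a boundary layer, with $L_k\to\infty$ slowly (e.g. $L_k=(\log k)^2$, so that the layer has $O(L_k k^{\mathrm{d}-1})$ sites, a vanishing fraction of $\abs{B_k}$). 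On the boundary layer the summands are nonnegative and, by the Wegner estimate applied with $h(t)=\im(t-z_k)^{-1}$, have expectation at most $\pi\norm{\gamma}_\infty$ uniformly in $k$, so that part contributes $O(L_k/k)\to 0$. On the interior core, the geometric resolvent identity expresses $\scalp{\delta_x}{(H^\omega_{B_k}-z_k)^{-1}\delta_x}-\scalp{\delta_x}{(H_\omega-z_k)^{-1}\delta_x}$ as a sum, over boundary bonds of $B_k$, of products of resolvent matrix elements joining $x$ to $\partial B_k$; bounding $s$-th moments term by term with~\eqref{FM_loc} (which is postulated uniformly in $z$ with $e_1<\re z<e_2$, $\im z\neq0$, and in the rectangle, hence applies to all the $z_k$) together with $\abs{a+b}^s\leq\abs{a}^s+\abs{b}^s$ gives an error which is $O(\abs{\partial B_k}e^{-cL_k})\to 0$ for some $c>0$; a Wegner-based uniform integrability argument then upgrades this $s$-th-moment smallness to smallness of the expectations of the (nonnegative, unbounded) imaginary parts. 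Finally, by translation invariance of $\Prob$ and since $\mu^{av}$ has density $\eta$, $\Exp\,\im\scalp{\delta_x}{(H_\omega-z_k)^{-1}\delta_x}=\int_\Real\im(t-e-\abs{B_k}^{-1}z)^{-1}\eta(t)\,dt\to\pi\eta(e)$ by the regularity hypothesis~\eqref{FatouPoint}. Assembling the three contributions yields (H2), and Theorem~\ref{GrigelionisPoisson} then gives the assertion.

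The main obstacle is the uniformity of the geometric resolvent comparison as $\im z_k=\abs{B_k}^{-1}\im z\downarrow 0$: one must check that the error bound extracted from~\eqref{FM_loc} does not deteriorate in this limit, and one must convert the resulting control of the off-diagonal $s$-th moments into control of the expected imaginary part of the diagonal resolvent element --- which is the step where the Wegner estimate is invoked a second time. A slightly cleaner variant bypasses Proposition~\ref{rescaled_measures_close_Minami} and works with $\txi^{\omega,e}_k$ directly, comparing each $H^\omega_{B_{k,j}}$ with $H_\omega$; the boundary layer is then a fraction $O(L_k/(2k)^{\al})$ of each box, which still vanishes because the box side $(2k)^{\al}\to\infty$, at the cost of running the boundary-bond sum over many small boxes rather than one large one.
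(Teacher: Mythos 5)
Your proposal is substantially correct, and you in fact sketch two routes, the second of which (the ``slightly cleaner variant'' at the very end) is exactly the paper's proof. The paper verifies (H0), (H1), (H3) by literally repeating the hierarchical computation, and for (H2) it reruns the proof of Proposition~\ref{rescaled_measures_close_Minami} with $H_k^\omega$ replaced by $H_\omega$: the wall/interior split of each small box $B_{k,j}$ with wall thickness $v_k=\beta\ln k$, the geometric resolvent identity across $\partial B_{k,j}$, the bound $\abs{\cdots}\le(\abs{B_k}/\im z)^2$, and the fractional-moment hypothesis~\eqref{FM_loc} give
$\Exp\bigl(\int g_z\,d\txi^{\omega,e}_k-\int\im(t-z_k)^{-1}\eta(t)\,dt\bigr)\to 0$
with identical power counting; then~\eqref{FatouPoint} identifies the limit as $\pi\eta(e)$.

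Your main variant --- reduce to $\xi^{\omega,e}_k$ via Proposition~\ref{rescaled_measures_close_Minami}, then compare $H^\omega_{B_k}$ with $H_\omega$ across $\partial B_k$ --- is a valid alternative. It works with a different power count ($O(k^{\mathrm d-1})$ boundary bonds for the single big box, vs.\ $O(n_k\,k^{\al(\mathrm d-1)})$ for the many small ones), but both vanish once the wall thickness is a sufficiently large multiple of $\log k$; your choice $L_k=(\log k)^2$ is more than enough. The trade-off you correctly flag --- that $\im z_k = \abs{B_k}^{-1}\im z$ shrinks and the trivial bound $(\im z_k)^{-2}$ blows up --- is exactly what the paper controls by interpolating $\abs{a}=\abs{a}^{1-s/2}\abs{a}^{s/2}$ and absorbing the polynomial factor into the exponential from~\eqref{FM_loc}. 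That interpolation is applied directly to the product of resolvent matrix elements, followed by Cauchy--Schwarz, so no separate ``Wegner-based uniform integrability'' step is needed in the interior estimate; Wegner is invoked only for the wall sum. One small point neither you nor the paper spells out: hypothesis~\eqref{FM_loc} is stated only for finite-volume restrictions $H_B^\omega$, so applying it to $H_\omega$ requires the standard fact (\cite{ASFH}) that finite-volume fractional-moment bounds pass to the infinite-volume resolvent.
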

\newcommand{\inter}{\mathrm{int}}
\newcommand{\wall}{\mathrm{wall}}
\textit{Proof of Proposition \ref{rescaled_measures_close_Minami}.}
As in the proof of Proposition \ref{rescaled_measures_close}, it is enough to prove \eqref{eq_same_limit_Minami} for the family of functions $$g_z(t)=\im (t-z)^{-1}, \qquad \im z> 0.$$ We set 
\begin{equation}\label{def_zk_Minami}
z_k=e+\abs{B_k}^{-1}z.
\end{equation} 
Then
$$\int g_z d\txi^{\omega,e}_k- \int g_z d\xi^{\omega,e}_k$$ $$=\abs{B_k}^{-1}\im
\sum_{j=1}^{n_k}\sum_{x\in B_{k,j}}\scalp{\delta_x}{\of{(H_{B_{k,j}}^\omega-z_k)^{-1}-(H_k^\omega-z_k)^{-1}}\delta_x}.$$
Let $v_k=\beta \ln k$, where $\beta>0$ is a fixed big enough constant to be specified later. We set
$$\inter(B_{k,j})=\set{x\in B_{k,j}: dist(x,\partial B_{k,j})\geq v_k },$$ and
$$\wall(B_{k,j})=\set{x\in B_{k,j}: dist(x,\partial B_{k,j})< v_k }.$$
Then 
$$\Exp\abs{\int g_z d\txi^{\omega,e}_k- \int g_z d\xi^{\omega,e}_k}\leq \Exp\abs{I_{k,\omega}}+ \Exp\abs{II_{k,\omega}},$$
where
$$I_{k,\omega}=\abs{B_k}^{-1}\im\sum_{j=1}^{n_k}\sum_{x\in \wall(B_{k,j})}\scalp{\delta_x}{\of{(H_{B_{k,j}}^\omega-z_k)^{-1}-(H_k^\omega-z_k)^{-1}}\delta_x},$$
$$II_{k,\omega}=\abs{B_k}^{-1}\im\sum_{j=1}^{n_k}\sum_{x\in \inter(B_{k,j})}\scalp{\delta_x}{\of{(H_{B_{k,j}}^\omega-z_k)^{-1}-(H_k^\omega-z_k)^{-1}}\delta_x}.$$

The Wegner estimate \eqref{avgtxi} yields that
$$\Exp \abs{I_{k,\omega}} \leq 2\pi\norm{\gamma}_{\infty} \abs{B_k}^{-1}\sum_{j=1}^{n_k}\abs{\wall(B_{k,j})},$$
and the right hand side converges to zero as $k\To\infty$.

To estimate $\Exp\abs{II_{k,\omega}}$, we use the resolvent identity
$$\scalp{\delta_x}{\of{(H_{B_{k,j}}^\omega-z_k)^{-1}-(H_k^\omega-z_k)^{-1}}\delta_x}$$

$$=\sum_{(y,y')} \scalp{\delta_x}{(H_{B_{k,j}}^\omega-z_k)^{-1}\delta_y}\scalp{\delta_{y'}}{(H_k^\omega-z_k)^{-1}\delta_x},$$
where the sum is over all pairs $(y,y')$, with $y\in\partial B_{k,j}$, $y'\notin B_{k,j}$ and $\abs{y-y'}=1$. Hence,
\begin{equation}\label{wall_resolvent}
\Exp\abs{II_{k,\omega}}\leq \abs{B_k}^{-1}\sum_{j=1}^{n_k}\sum_{x\in \inter(B_{k,j})}\sum_{(y,y')}\Exp \abs{ \scalp{\delta_x}{(H_{B_{k,j}}^\omega-z_k)^{-1}\delta_y}\scalp{\delta_{y'}}{(H_k^\omega-z_k)^{-1}\delta_x}}.
\end{equation}
For $k$ large enough so that $e_1<\re z_k<e_2$, we use the main assumption \eqref{FM_loc} together with the bound
$$\abs{ \scalp{\delta_x}{(H_{B_{k,j}}^\omega-z_k)^{-1}\delta_y}\scalp{\delta_{y'}}{(H_k^\omega-z_k)^{-1}\delta_x}}\leq (\im z_k)^{-2}=(\abs{B_k}/\im z)^2,$$ to obtain 
\begin{equation}\label{bound_wall}
\begin{split}
&\Exp \abs{ \scalp{\delta_x}{(H_{B_{k,j}}^\omega-z_k)^{-1}\delta_y}\scalp{\delta_{y'}}{(H_k^\omega-z_k)^{-1}\delta_x}}\\
&\leq (\abs{B_k}/\im z)^{2(1-s/2)}\Exp  \abs{ \scalp{\delta_x}{(H_{B_{k,j}}^\omega-z_k)^{-1}\delta_y}\scalp{\delta_{y'}}{(H_k^\omega-z_k)^{-1}\delta_x}}^{s/2}\\
&\leq (\abs{B_k}/\im z)^{2(1-s/2)} \of{\Exp \abs{\scalp{\delta_x}{(H_{B_{k,j}}^\omega-z_k)^{-1}\delta_y}}^s}^{1/2}\of{\Exp\abs{\scalp{\delta_{y'}}{(H_k^\omega-z_k)^{-1}\delta_x}}^s}^{1/2}\\
&\leq (\abs{B_k}/\im z)^{2(1-s/2)} Ce^{-Dv_k}.\\
\end{split}
\end{equation}
Since, in \eqref{wall_resolvent}, there are $O(k^{\al(\mathrm{d}-1)})$ pairs $(y,y')$ for each $B_{k,j}$, the bounds \eqref{wall_resolvent} and \eqref{bound_wall} yield
\begin{equation*}
\begin{split}
\Exp\abs{II_{k,\omega}} &\leq O(k^{\al(\mathrm{d}-1)} \abs{B_k}^{2(1-s/2)} e^{-Dv_k})\\
                                      & = O(k^{\al(\mathrm{d}-1)+2\mathrm{d}(1-s/2)} e^{-D\beta\ln k})\\
\end{split}
\end{equation*}
Hence, if we choose $\beta>D^{-1} \of{\al(\mathrm{d}-1)+2\mathrm{d}(1-s/2)}$, then $\Exp\abs{II_{k,\omega}}\To 0$ as $k\To\infty$.
$\Box$

\textit{Proof of Proposition \ref{asymptoticallyPoisson_Minami}.}
As in the proof of Propositon \ref{asymptoticallyPoisson},
it suffices to show that $\txi^{\omega,e}_k$ and the $\txi^{\omega,e}_{k,j}$ verify the four hypotheses of Theorem \ref{GrigelionisPoisson}.
The proof of (H0), (H1) and (H3) is the same as in Propositon \ref{asymptoticallyPoisson}. It remains to show that H2 holds, i.e. for $\im z>0$,
\begin{equation}\label{H2_Minami}
\lim_{k\To\infty}\Exp\int g_z d\txi^{\omega,e}_{k}=\pi\eta(e).
\end{equation}
The argument of the proof of Proposition  \ref{rescaled_measures_close_Minami}, with $H^{\omega}_{k}$ replaced by $H_\omega$, yields that
\begin{equation}\label{H2_MinamiRep}
\lim_{k\To\infty}\Exp\of{\int g_z d\txi^{\omega,e}_{k} - \int g_z d\mu^{av}}=0,
\end{equation}
and then \eqref{H2_Minami} follows from \eqref{H2_MinamiRep} and \eqref{FatouPoint}.

$\Box$



\end{document}